\newtheorem{lemma}{\textbf{Lemma}}[section]
\newtheorem{theorem}{\textbf{Theorem}}
\newtheorem{remark}{\textbf{Remark}}[section]
\newtheorem{corollary}{\textbf{Corollary}}[section]
\newtheorem{example}{\textbf{Example}}[section]
\newcommand{\tabincell}[2]{\begin{tabular}{@{}#1@{}}#2\end{tabular}}
\newcommand{\F}{\mathbb{F}}
\begin{document}

\baselineskip 17pt\title{\bf New Galois Hulls of GRS Codes and Application to EAQECCs}
\author{\large  Xiaolei Fang\quad\quad Renjie Jin \quad\quad Jinquan Luo* \quad\quad Wen Ma}\footnotetext{Xiaolei Fang is with School of Mathematics and Computer Science, Wuhan Polytechnic University, Wuhan China 430048. Renjie Jin is with College of Liberal Arts and Sciences, National University of Defense Technology, Changsha China 410073. Jinquan Luo and Wen Ma are with School of Mathematics and Statistics \& Hubei Key Laboratory of Mathematical Sciences, Central China Normal University, Wuhan China 430079.\\  E-mail: fxl\_1918@163.com(X.Fang), jinrenjie@mails.ccnu.edu.cn(R.Jin), luojinquan@mail.ccnu.edu.cn(J.Luo), mawen95@126.com(W.Ma) \\ *: Corresponding author}
\date{}
\maketitle

{\bf Abstract:} Galois hulls of linear codes have important applications in quantum coding theory. In this paper, we construct some new classes of (extended) generalized Reed-Solomon (GRS) codes with Galois hulls of arbitrary dimensions. We also propose a general method on constructing GRS codes with Galois hulls of arbitrary dimensions from special Euclidean orthogonal GRS codes. Finally, we construct several new families of entanglement-assisted quantum error-correcting codes (EAQECCs) and MDS EAQECCs by utilizing the above results.

{\bf Key words:} Galois hull, MDS code, generalized Reed-Solomon (GRS) code, entanglement-assisted quantum error-correcting code (EAQECC)

{\bf Mathematics Subject Classification (2010):} 12E20 \quad 81P70 \quad 94B05

\section{Introduction}
 \quad\; Let $q=p^h$, where $p$ is an odd prime. Denote by $\mathbb{F}_{q}$ the finite field with $q$ elements. An $[n,k,d]_{q}$ code $\mathcal{C}$ is
a linear code over $\mathbb{F}_{q}$ with length $n$, dimension $k$ and minimum distance $d$. The Singleton bound states that $k\leq n-d+1$.
If the parameters of $\mathcal{C}$ reach the Singleton bound (i.e., $k=n-d+1$), then $\mathcal{C}$ is called a maximum distance separable (MDS) code.
Due to the optimal properties, MDS codes play an important role in coding theory and related fields, see [\ref{BR}, \ref{SR}].

Let $\mathcal{C}^\bot$ be the dual code of $\mathcal{C}$. The hull of $\mathcal{C}$ is defined by $Hull(\mathcal{C})=\mathcal{C}\bigcap \mathcal{C}^\bot$.
Two special cases are extremely interesting for researchers. One is $Hull(\mathcal{C})=\{ \mathbf{0}\}$. In this case, $\mathcal{C}$ is called a linear
complementary dual (LCD) code. In [\ref{CG}], Carlet et al. constructed LCD codes by utilizing cyclic
codes, Reed-Solomon codes and generalized residue codes, together with direct sum, puncturing, shortening, extension, $(u|u+v)$ construction and suitable
automorphism action. In [\ref{CMTQ}] and [\ref{CMTQP}], Carlet et al. showed that any linear code over $\mathbb{F}_{q}$ $(q>3)$ is equivalent to a
Euclidean LCD code and any linear code over $\mathbb{F}_{q^{2}}$ $(q>2)$ is equivalent to a Hermitian LCD code. The other case is
$Hull(\mathcal{C})=\mathcal{C}$ (resp. $\mathcal{C}^\bot$), in which $\mathcal{C}$ is called a self-orthogonal (resp. dual containing) code.
In particular, the code $\mathcal{C}$ satisfying $\mathcal{C}=\mathcal{C}^\bot$ is called a self-dual code. The construction of MDS Euclidean self-dual code is a popular issue in recent years.
In [\ref{GG}], Grassl and Gulliver showed that the problem has been completely solved over finite fields of characteristic $2$. In [\ref{JX}]
and [\ref{Yan}], Jin, Xing and Yan constructed some classes of new MDS self-dual codes through (extended) GRS codes.

In 1995, the first quantum error-correcting code was constructed. Shortly after that, Shor et al. established a connection between quantum codes and
classical error-correcting codes satisfying certain self-orthogonal or dual containing property in [\ref{SL}]. In [\ref{BDH}], Brun et al. introduced
entanglement-assisted quantum error-correcting code (EAQECC), which did not require the dual-containing property for standard quantum error-correcting
codes. Thus we could construct EAQECCs via classical linear codes without self-orthogonality. However, the determination of the number of shared pairs
was not an easy thing. In [\ref{GJG}], Guenda et al. showed that the number of shared pairs was related to the hull of classical linear code. So
the results on hulls of classical linear codes can be applied in the constructions of EAQECCs and some new families of EAQECCs were discovered through
the hulls of classical linear codes.

Some people have studied the hulls of classical linear codes and constructed EAQECCs and MDS EAQECCs. In [\ref{LC}] and [\ref{LCC}], several
infinite families of MDS codes with Euclidean hulls of arbitrary dimensions were presented. Then they were applied to construct some families of
MDS EAQECCs. Also utilizing (extended) GRS codes, in [\ref{FFLZ}], several new MDS codes with Euclidean or Hermitian hulls of arbitrary dimensions
were proposed and these MDS codes were also applied in the constructions of MDS EAQECCs. Recently, in [\ref{Cao}],
Cao gave several new families of MDS codes with Galois hulls of arbitrary dimensions and constructed nine new families of MDS EAQECCs.
In [\ref{QCWL}], Qian et al. gave a general construction on MDS codes with Galois hulls of arbitrary dimensions.

Based on the above results, we construct some new MDS codes with Galois hulls of arbitrary dimensions. Furthermore, we propose a mechanism on how to find
MDS codes with Galois hulls of arbitrary dimensions from GRS codes with special Euclidean orthogonal property. All the known results on the constructions
of Galois hulls of MDS codes are listed in Table 1.

This paper is organized as follows. In Section 2, we will introduce some basic knowledge and useful results on GRS codes and Galois hulls.
In Section 3, we construct some new (extended) GRS codes with Galois hulls of arbitrary dimensions. In Section 4, we propose a mechanism for
the constructions of some special cases. In Section 5, we will present our main results on the constructions of new EAQECCs and MDS EAQECCs.
In Section 6, we will make a conclusion.

\begin{center}
\begin{longtable}{|c|c|c|c|}  
\caption{Known constructions on Galois hulls of MDS codes} \\ \hline
$q$ & $n$ & $k$ & Reference\\  \hline
$q=p^h$  &  $n \mid q-1$  & $1\leq k\leq \lfloor\frac{p^e+n-1}{p^e+1}\rfloor$  & [\ref{Cao}] \\ \hline
$q=p^h$ & $n|(p^e-1)$ &  $1\leq k\leq \lfloor\frac{n}{2}\rfloor$  & [\ref{Cao}]\\ \hline

$q=p^h$ is odd, $2e\mid h$ &  $n\leq p^e$ & $1\leq k\leq \lfloor\frac{p^e+n-1}{p^e+1}\rfloor$  & [\ref{Cao}]\\ \hline

$q=p^h$ is odd, $2e\mid h$ & \tabincell{c}{$n=\frac{r(q-1)}{\gcd(x_2,q-1)}$, $1\leq r\leq \frac{q-1}{\gcd(x_1,q-1)}$, \\ $(q-1)|lcm(x_1,x_2)$ and $\frac{q-1}{p^e-1}\mid x_1$} & $1\leq k\leq \lfloor\frac{p^e+n}{p^e+1}\rfloor$  &  [\ref{Cao}] \\ \hline

$q=p^h$ is odd, $2e\mid h$ & \tabincell{c}{$n=\frac{r(q-1)}{\gcd(x_2,q-1)}+1$, $1\leq r\leq \frac{q-1}{\gcd(x_1,q-1)}$, \\ $(q-1)|lcm(x_1,x_2)$ and $\frac{q-1}{p^e-1}\mid x_1$} & $1\leq k\leq \lfloor\frac{p^e+n}{p^e+1}\rfloor$  &  [\ref{Cao}] \\ \hline

$q=p^h$ is odd, $2e\mid h$ & \tabincell{c}{$n=\frac{r(q-1)}{\gcd(x_2,q-1)}+2$, $1\leq r\leq \frac{q-1}{\gcd(x_1,q-1)}$, \\ $(q-1)|lcm(x_1,x_2)$ and $\frac{q-1}{p^e-1}\mid x_1$} & $1\leq k\leq \lfloor\frac{p^e+n}{p^e+1}\rfloor$  &  [\ref{Cao}] \\ \hline

$q=p^h$ is odd, $2e\mid h$ & \tabincell{c}{$n=rm$, $1\leq r\leq \frac{p^e-1}{m_1}$, $m_1=\frac{m}{\gcd(m,y)}$, \\ $m\mid (q-1)$ and $y=\frac{q-1}{p^e-1}$} & $1\leq k\leq \lfloor\frac{p^e+n}{p^e+1}\rfloor$  &  [\ref{Cao}] \\ \hline

$q=p^h$ is odd $2e\mid h$ & \tabincell{c}{$n=rm+1$, $1\leq r\leq \frac{p^e-1}{m_1}$, $m_1=\frac{m}{\gcd(m,y)}$, \\ $m\mid (q-1)$ and $y=\frac{q-1}{p^e-1}$} & $1\leq k\leq \lfloor\frac{p^e+n}{p^e+1}\rfloor$  &  [\ref{Cao}] \\ \hline

$q=p^h$ is odd, $2e\mid h$ & \tabincell{c}{$n=rm+2$, $1\leq r\leq \frac{p^e-1}{m_1}$, $m_1=\frac{m}{\gcd(m,y)}$, \\ $m\mid (q-1)$ and $y=\frac{q-1}{p^e-1}$} & $1\leq k\leq \lfloor\frac{p^e+n}{p^e+1}\rfloor$  &  [\ref{Cao}] \\ \hline

$q = p^h $ is even & $n\leq q$, $\frac{m}{\gcd(e,m)}$ and $m>1$ &  $1\leq k\leq \lfloor\frac{p^e+n-1}{p^e+1}\rfloor$  & [\ref{QCWL}] \\ \hline

$q=p^h>3$ & $n\leq r$, $r=p^m$ with $m\mid h$ and $p^e+1\mid\frac{q-1}{r-1}$ &  $1\leq k\leq \lfloor\frac{p^e+n-1}{p^e+1}\rfloor$ & [\ref{QCWL}]\\\hline

$q = p^h>3 $ & $n\mid q$ & $1\leq k\leq \lfloor\frac{p^e+n-1}{p^e+1}\rfloor$   & [\ref{QCWL}] \\ \hline

$q=p^h>3$ & \tabincell{c}{$p\mid n$, $F(x)=a+bx+\sum_{i:p|i}a_ix^i+x^n$ \\ can be completely decomposed in $\F_q$} & $1\leq k\leq \lfloor\frac{p^e+n-1}{p^e+1}\rfloor$  & [\ref{QCWL}] \\ \hline

$q=p^h>3$ & $(n-1)|(q -1)$  &  $1\leq k\leq \lfloor\frac{p^e+n-1}{p^e+1}\rfloor$   &  [\ref{QCWL}] \\ \hline

$q=p^h>3$ & $n=2n'$ and $n'|q$   &  $1\leq k\leq \lfloor\frac{p^e+n-1}{p^e+1}\rfloor$   & [\ref{QCWL}]\\\hline

$q=p^h>3$ & \tabincell{c}{$n= n't$, $n'=r^m$, $r=p^\varepsilon$, $1\leq t \leq r$, \\ $1\leq m\leq \frac{h}{\varepsilon}-1$ and $\gcd(p^e+1,q-1)\mid\frac{q-1}{r-1}$} &  $1\leq k\leq \lfloor\frac{p^e+n-1}{p^e+1}\rfloor$   & [\ref{QCWL}]\\ \hline

$q=p^{h}>3$ & $n\mid (q-1)$ & $1\leq k\leq \lfloor\frac{p^e+n}{p^e+1}\rfloor$   & [\ref{QCWL}]\\ \hline

$q=p^h$ & $n=2n'$ and $n'|(q-1)$   &  $1\leq k\leq \lfloor\frac{p^e+n}{p^e+1}\rfloor$   & [\ref{QCWL}]\\\hline

$q=p^h>3$ & \tabincell{c}{$n=tn'$, $n'|(q-1)$, $1\leq t\leq \frac{r-1}{n_1}$, \\ $r=p^m$ with $m\mid h$, $n_1=\frac{n'}{n_2}$,  \\ $n_2=\gcd(n',\frac{q-1}{r-1})$ and  $\gcd(p^e+1,q-1)\mid\frac{q-1}{r-1}$} &  $1\leq k\leq \lfloor\frac{p^e+n}{p^e+1}\rfloor$   & [\ref{QCWL}]\\\hline
\end{longtable}
\end{center}

\section{Preliminaries}

\quad\; In this section, we introduce some basic notations and useful results on (extended) GRS codes and Galois hulls.
Readers are referred to [\ref{MS}, Chapter 10] for more details on (extended) GRS codes.

Let $\mathbb{F}_{q}$ be a finite field with $q$ elements. Denote by $\F_q^*=\F_q\backslash\{0\}$. In this paper, we always assume $q=p^h$,
where $p$ is an odd prime. For $1\leq n\leq q$, choose two vectors
$\mathbf{v}=(v_{1},v_{2},\ldots,v_{n})\in (\mathbb{F}_{q}^{*})^{n}$ and $\mathbf{a}=(a_{1},a_{2},\ldots,a_{n})\in \mathbb{F}_{q}^{n}$,
where $a_{i}(1\leq i\leq n)$ are distinct. For an integer $k$ with $1\leq k\leq n$, the GRS code of length $n$ associated with $\mathbf{v}$
and $\mathbf{a}$ is defined as follows:
\begin{equation}\label{def GRS}
\mathbf{GRS}_{k}(\mathbf{a},\mathbf{v})=\{(v_{1}f(a_{1}),\ldots,v_{n}f(a_{n})):f(x)\in\mathbb{F}_{q}[x], \mathrm{deg}(f(x))\leq k-1\}.
\end{equation}
The code $\mathbf{GRS}_{k}(\mathbf{a},\mathbf{v})$ is a $q$-ary $[n,k]$ MDS code and its dual is also MDS [\ref{MS}, Chapter 11].

The extended GRS code associated with $\mathbf{v}$ and $\mathbf{a}$ is defined by:
\begin{equation}\label{def extended GRS}
\mathbf{GRS}_{k}(\mathbf{a},\mathbf{v},\infty)=\{(v_{1}f(a_{1}),\ldots,v_{n}f(a_{n}),f_{k-1}):f(x)\in\mathbb{F}_{q}[x],
\mathrm{deg}(f(x))\leq k-1\},
\end{equation}
where $f_{k-1}$ is the coefficient of $x^{k-1}$ in $f(x)$.
The code $\mathbf{GRS}_{k}(\mathbf{a},\mathbf{v},\infty)$ is a $q$-ary $[n+1,k]$ MDS code and its dual is also MDS [\ref{MS}, Chapter 11].

For $1\leq i\leq n$, we define
\begin{equation*}\label{solution}
u_i:=\prod_{1\leq j\leq n,j\neq i}(a_{i}-a_{j})^{-1},
\end{equation*}
which will be used frequently in this paper.

For any two vectors $\mathbf{a}=(a_{1},a_{2},\ldots,a_{n})$ and $\mathbf{b}=(b_{1},b_{2},\ldots,b_{n})$ in $\mathbb{F}_{q}^{n}$,
we define their Euclidean inner product as:
\begin{equation*}
\langle\mathbf{a},\mathbf{b}\rangle=\sum\limits_{i=1}^{n}a_{i}b_{i}.
\end{equation*}
The Euclidean dual code of $\mathcal{C}$ is defined as:
\begin{equation*}
\mathcal{C}^\perp=\left\{\mathbf{a} \in \mathbb{F}_{q}^{n}:\langle\mathbf{a},\mathbf{b}\rangle=0  \text{ for any }\mathbf{b} \in \mathcal{C} \right\}.
\end{equation*}
The Euclidean hull of $\mathcal{C}$ is defined by $Hull(\mathcal{C})=\mathcal{C}\cap \mathcal{C}^\perp$.

Similarly, if $h$ is even, for any two vectors $\mathbf{a}=(a_{1},a_{2},\ldots,a_{n})$ and $\mathbf{b}=(b_{1},b_{2},\ldots,b_{n})$ in
$\mathbb{F}_{q}^{n}$, the Hermitian inner product is defined as:
\begin{equation*}
\langle\mathbf{a},\mathbf{b}\rangle_{\frac{h}{2}}=\sum\limits_{i=1}^{n}a_{i}b_{i}^{p^{\frac{h}{2}}}.
\end{equation*}
The Hermitian dual code of $\mathcal{C}$ is defined as:
\begin{equation*}
\mathcal{C}^{\perp_{\frac{h}{2}}}=\left\{\mathbf{a} \in \mathbb{F}_{q}^{n}:\langle\mathbf{a},\mathbf{b}\rangle_{\frac{h}{2}}=0  \text{ for any }\mathbf{b} \in \mathcal{C} \right\}.
\end{equation*}
The Hermitian hull of $\mathcal{C}$ is defined by $Hull_{\frac{h}{2}}(\mathcal{C})=\;\mathcal{C}\cap \mathcal{C}^{\perp_{\frac{h}{2}}}.$

In [\ref{FZ}], Fan et al. gave the definition of $e$-Galois inner product with $0\leq e\leq h-1$. It is a generalization of Euclidean inner product and
Hermitian inner product. For any two vectors $\mathbf{a}=(a_{1},a_{2},\ldots,a_{n})$ and $\mathbf{b}=(b_{1},b_{2},\ldots,b_{n})$ in
$\mathbb{F}_{q}^{n}$, the $e$-Galois inner product is defined as:
\begin{equation*}
\langle\mathbf{a},\mathbf{b}\rangle_e=\sum\limits_{i=1}^{n}a_{i}b_{i}^{p^e}.
\end{equation*}
The $e$-Galois dual code of $\mathcal{C}$ is defined as:
\begin{equation*}
\mathcal{C}^{\perp_e}=\left\{\mathbf{a} \in \mathbb{F}_{q}^{n}:\langle\mathbf{a},\mathbf{b}\rangle_e=0  \text{ for any }\mathbf{b} \in \mathcal{C} \right\}.
\end{equation*}
The $e$-Galois hull of $\mathcal{C}$ is defined by $Hull_{e}(\mathcal{C})=\;\mathcal{C}\cap \mathcal{C}^{\perp_{e}}.$

In the following, we list some useful results, which will be used in our constructions.

\begin{lemma}([\ref{Cao}, Lemmas 2 and 3])\label{HE1}
Let $\mathcal{C}$ be a linear code and $\mathbf{c}$ be a codeword of $\mathcal{C}$.

(i). For $\mathcal{C}=\mathbf{GRS}_{k}(\mathbf{a},\mathbf{v})$, the codeword
$\mathbf{c}=(v_{1}f(a_{1}),\ldots,v_{n}f(a_{n}))\in\mathcal{C}\bigcap\mathcal{C}^{\bot_e}$ if and only if there exists a polynomial
$g(x)\in\mathbb{F}_{q}[x]$ with $\mathrm{deg}(g(x))\leq n-k-1$, such that
\begin{center}
$(v_{1}^{p^e+1}f^{p^e}(a_{1}),v_{2}^{p^e+1}f^{p^e}(a_{2}),\ldots,v_{n}^{p^e+1}f^{p^e}(a_{n}))=(u_{1}g(a_{1}),u_{2}g(a_{2}),\ldots,u_{n}g(a_{n}))$.
\end{center}

(ii). For $\mathcal{C}=\mathbf{GRS}_{k}(\mathbf{a}, \mathbf{v}, \infty)$, the codeword
$\mathbf{c}=(v_{1}f(a_{1}),\ldots,v_{n}f(a_{n}), f_{k-1})\in\mathcal{C}\bigcap\mathcal{C}^{\bot_e}$
if and only if there exists a polynomial $g(x)\in\mathbb{F}_{q}[x]$ with $\mathrm{deg}(g(x))\leq n-k$, such that
\begin{center}
$(v_{1}^{p^e+1}f^{p^e}(a_{1}),v_{2}^{p^e+1}f^{p^e}(a_{2}),\ldots,v_{n}^{p^e+1}f^{p^e}(a_{n}),f_{k-1}^{p^e})=
(u_{1}g(a_{1}),u_{2}g(a_{2}),\ldots,u_{n}g(a_{n}), -g_{n-k})$.
\end{center}
\end{lemma}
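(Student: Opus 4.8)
The plan is to convert the $e$-Galois orthogonality condition into an ordinary Euclidean orthogonality condition for a Frobenius-twisted vector, and then to read off the answer from the classical description of the Euclidean dual of a (extended) GRS code. The one computational fact I will lean on throughout is the partial-fraction/Lagrange identity $\sum_{i=1}^{n}u_i a_i^{m}=0$ for $0\le m\le n-2$, together with $\sum_{i=1}^{n}u_i a_i^{n-1}=1$; the former yields $\mathbf{GRS}_{k}(\mathbf{a},\mathbf{1})^{\perp}=\mathbf{GRS}_{n-k}(\mathbf{a},\mathbf{u})$, where $\mathbf{1}$ denotes the all-ones multiplier and $\mathbf{u}=(u_1,\dots,u_n)$, while the latter controls the extra coordinate in the extended case.

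For (i), I first note that $\mathbf{c}=(v_1f(a_1),\dots,v_nf(a_n))$ already belongs to $\mathcal{C}$, so only the membership $\mathbf{c}\in\mathcal{C}^{\perp_e}$ must be characterized. Writing an arbitrary word of $\mathcal{C}$ as $(v_i\tilde{f}(a_i))_i$ with $\deg\tilde{f}\le k-1$, the defining relation $\langle\mathbf{c},(v_i\tilde f(a_i))\rangle_e=0$ expands to $\sum_{i=1}^{n}v_i^{p^e+1}f(a_i)\tilde f(a_i)^{p^e}=0$ for all such $\tilde f$. Applying the appropriate power of the Frobenius automorphism to this scalar identity (and using $x^{p^{h}}=x$ on $\mathbb{F}_q$ to clear the $p^e$ off the second factor) turns it into the statement that the twisted vector $\mathbf{d}:=(v_i^{p^e+1}f^{p^e}(a_i))_i$ is Euclidean-orthogonal to every word of $\mathbf{GRS}_{k}(\mathbf{a},\mathbf{1})$. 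Hence $\mathbf{d}\in\mathbf{GRS}_{k}(\mathbf{a},\mathbf{1})^{\perp}=\mathbf{GRS}_{n-k}(\mathbf{a},\mathbf{u})$, which is exactly the existence of $g$ with $\deg g\le n-k-1$ and $v_i^{p^e+1}f^{p^e}(a_i)=u_i g(a_i)$. All steps are reversible here—the identity $\sum_i u_i a_i^m=0$ is an equivalence on the degree range in play—so this also yields the converse, completing (i).

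For (ii), I would run the same argument for the extended code, where a word is $(v_i\tilde f(a_i),\tilde f_{k-1})$ and the orthogonality relation acquires the extra summand $f_{k-1}\tilde f_{k-1}^{p^e}$ coming from the coordinate at infinity. The Euclidean dual of the extended code is again an extended GRS code, now of dimension $n+1-k$, so the matching polynomial $g$ has degree up to $n-k$, accounting for the shifted bound in the statement. The main obstacle, and the part deserving genuine care, is the infinity coordinate: in the extended GRS code the last coordinate records the top coefficient ($f_{k-1}$, the coefficient of $x^{k-1}$) rather than an evaluation, so the product $f\cdot g$ may now reach degree $n-1$, and the term governed by $\sum_i u_i a_i^{n-1}=1$ that would otherwise destroy orthogonality is precisely cancelled by the infinity contribution. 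Tracking this cancellation—that is, equating the residue at infinity with the leading coefficients—is what forces the boundary relation $f_{k-1}^{p^e}=-g_{n-k}$, with $g_{n-k}$ the coefficient of $x^{n-k}$ in $g$ and the sign dictated by that cancellation, and it pins down the last coordinate of the dual word as $-g_{n-k}$. Assembling the $n$ evaluation coordinates together with this infinity coordinate gives the stated equivalence, and reversing the implications as before yields the converse.
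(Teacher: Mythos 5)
First, a point of comparison: the paper does not prove this lemma at all --- it is imported verbatim from [\ref{Cao}, Lemmas 2 and 3] --- so there is no in-paper argument to measure yours against and your attempt must stand on its own. Your overall strategy is the standard and correct one: trade $e$-Galois orthogonality for Euclidean orthogonality of a Frobenius-twisted vector, quote $\mathbf{GRS}_{k}(\mathbf{a},\mathbf{1})^{\perp}=\mathbf{GRS}_{n-k}(\mathbf{a},\mathbf{u})$ via $\sum_{i}u_{i}a_{i}^{m}=0$ for $0\le m\le n-2$, and let $\sum_{i}u_{i}a_{i}^{n-1}=1$ govern the coordinate at infinity. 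Your diagnosis of where $-g_{n-k}$ comes from in (ii) is exactly right.

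There are, however, two concrete problems. In (i), the Frobenius step does not produce the vector you claim. Starting from $\sum_{i}v_{i}^{p^e+1}f(a_{i})\tilde f(a_{i})^{p^e}=0$, the only automorphism that clears the $p^e$ off the second factor is $x\mapsto x^{p^{h-e}}$, and applying it yields $\sum_{i}v_{i}^{p^{h-e}+1}f(a_{i})^{p^{h-e}}\tilde f(a_{i})=0$; the twisted vector you actually obtain is $\bigl(v_{i}^{p^{h-e}+1}f^{p^{h-e}}(a_{i})\bigr)_{i}$, not $\bigl(v_{i}^{p^e+1}f^{p^e}(a_{i})\bigr)_{i}$, so your chain of equivalences characterizes the dual (as defined in Section 2, with the fixed vector in the first slot) by the identity with $e$ replaced by $h-e$. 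To land on the form asserted in the lemma you must instead pair the fixed codeword in the \emph{second} argument, expanding $\langle(v_{i}\tilde f(a_{i}))_{i},\mathbf{c}\rangle_{e}=\sum_{i}v_{i}^{p^e+1}\tilde f(a_{i})f(a_{i})^{p^e}=\sum_{i}u_{i}(\tilde f g)(a_{i})$, which vanishes at once since $\deg(\tilde f g)\le n-2$; this is the opposite slot convention for $\mathcal{C}^{\perp_e}$ from the one the paper writes down, a discrepancy inherited from [\ref{Cao}] that you need to flag and resolve rather than elide with ``the appropriate power of Frobenius.'' Second, part (ii) is a plan rather than a proof: you correctly identify the cancellation at infinity, but you neither state nor prove the description of $\mathbf{GRS}_{k}(\mathbf{a},\mathbf{1},\infty)^{\perp}$ as $\{(u_{1}g(a_{1}),\ldots,u_{n}g(a_{n}),-g_{n-k}):\deg g\le n-k\}$, nor carry out the computation $\sum_{i}u_{i}(\tilde f g)(a_{i})=\tilde f_{k-1}g_{n-k}$ that, combined with the infinity summand $\tilde f_{k-1}f_{k-1}^{p^e}$, forces $f_{k-1}^{p^e}=-g_{n-k}$. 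Those are precisely the pieces that need proof; as written they are only asserted.
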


\begin{lemma}([\ref{FMJ}, Corollaries 2.1 and 2.2])\label{cor1}
Let $\mathcal{C}(m)$ be an (extended) GRS code, where $m$ represents dimension.

(i). Assume that $\mathcal{C}(m)=\mathbf{GRS}_{m}(\mathbf{a},\mathbf{v})$ and $1\leq m\leq \lfloor\frac{n}{2}\rfloor$. Then
$\mathcal{C}(m)^\bot=\mathcal{C}(n-m)$ if and only if there exists $\lambda\in \mathbb{F}_{q}^{*}$ such that $\lambda u_{i}=v_{i}^{2}$
for all $i=1,2,\ldots,n$. In particular, when $n$ is even and $m=\frac{n}{2}$, $\mathcal{C}(\frac{n}{2})$ is MDS self-dual (see [\ref{JX}, Corollary 2.4]).

(ii). Assume that $\mathcal{C}(m)=\mathbf{GRS}_{m}(\mathbf{a},\mathbf{v},\infty)$ and $1\leq m\leq \lfloor\frac{n+1}{2}\rfloor$.
Then $\mathcal{C}(m)^{\bot}=\mathcal{C}(n+1-m)$ if and only if $-u_{i}=v_{i}^{2}$ for all $i=1,2 ,\ldots,n$. In particular, when $n$ is odd and
$m=\frac{n+1}{2}$, $\mathcal{C}(\frac{n+1}{2})$ is MDS self-dual (see Lemma 2.2 of [\ref{Yan}]).
\end{lemma}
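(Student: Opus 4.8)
The plan is to reduce both statements about duals to a single orthogonality condition that can be read off from one interpolation identity. The key elementary fact I would record first is this: for the quantities $u_i=\prod_{j\neq i}(a_i-a_j)^{-1}$ and any polynomial $P(x)$ with $\deg P\leq n-1$, Lagrange interpolation at the $n$ distinct nodes $a_1,\dots,a_n$ shows that $\sum_{i=1}^n u_iP(a_i)$ equals the coefficient of $x^{n-1}$ in $P$; in particular $\sum_i u_iP(a_i)=0$ whenever $\deg P\leq n-2$, while $\sum_i u_ia_i^{n-1}=1$. Since evaluation at $n$ distinct points identifies the space of polynomials of degree $\leq n-1$ with $\F_q^n$, the vectors $\mathbf{w}=(w_1,\dots,w_n)$ satisfying $\sum_i w_iP(a_i)=0$ for every $P$ with $\deg P\leq n-2$ form a one-dimensional space, necessarily spanned by $\mathbf{u}=(u_1,\dots,u_n)$.

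For part (i), I would first note that $\mathcal{C}(m)^\bot$ and $\mathcal{C}(n-m)$ both have dimension $n-m$, so $\mathcal{C}(m)^\bot=\mathcal{C}(n-m)$ is equivalent to the single inclusion $\mathcal{C}(n-m)\subseteq\mathcal{C}(m)^\bot$, that is, to the orthogonality of $\mathcal{C}(m)$ and $\mathcal{C}(n-m)$. Writing codewords as $(v_if(a_i))_i$ and $(v_ig(a_i))_i$ with $\deg f\leq m-1$ and $\deg g\leq n-m-1$, this orthogonality reads $\sum_i v_i^2 f(a_i)g(a_i)=0$. Since the products $fg$ span the space of polynomials of degree $\leq n-2$, the condition is equivalent to $\sum_i v_i^2 P(a_i)=0$ for every $P$ with $\deg P\leq n-2$; by the fact recalled above this holds exactly when $(v_i^2)_i=\lambda(u_i)_i$ for some scalar $\lambda$, and $\lambda\in\F_q^*$ because the $v_i$ are nonzero. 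This is the asserted criterion, and the self-dual case $n$ even, $m=n/2$ is immediate since then $n-m=m$.

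For part (ii) the same dimension count shows $\mathcal{C}(m)^\bot=\mathcal{C}(n+1-m)$ is equivalent to the orthogonality of $\mathcal{C}(m)$ and $\mathcal{C}(n+1-m)$, but now each codeword carries the extra $\infty$-coordinate, so the inner product equals $\sum_i v_i^2 f(a_i)g(a_i)+f_{m-1}g_{n-m}$ with $\deg f\leq m-1$ and $\deg g\leq n-m$. Here $\deg(fg)\leq n-1$ and the coefficient of $x^{n-1}$ in $fg$ is exactly $f_{m-1}g_{n-m}$, so the interpolation identity gives $\sum_i u_i f(a_i)g(a_i)=f_{m-1}g_{n-m}$. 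Assuming $v_i^2=-u_i$ therefore makes the inner product vanish identically, which yields one direction. For the converse I would first restrict to $f,g$ of strictly smaller degree, where the $f_{m-1}g_{n-m}$ term drops out, to obtain $\sum_i v_i^2 P(a_i)=0$ for all $\deg P\leq n-2$ and hence $v_i^2=\lambda u_i$ as before; then substituting $f=x^{m-1}$, $g=x^{n-m}$ into the full identity forces $\lambda+1=0$. Thus $\lambda=-1$ and $-u_i=v_i^2$, and the self-dual case $n$ odd, $m=(n+1)/2$ follows at once.

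The part I expect to require the most care is the extended case: the $\infty$-coordinate both removes the free scalar (which is why (ii) has the sharp condition $-u_i=v_i^2$ rather than $\lambda u_i=v_i^2$) and forces me to split the orthogonality analysis into the degree-$\leq n-2$ part, handled by the one-dimensionality of $\mathbf{u}$, and the top-degree part, where the leading-coefficient coordinate pins $\lambda$ to $-1$. Establishing the identity $\sum_i u_iP(a_i)=[x^{n-1}]P$ cleanly is what underlies both parts, and it is the one computation I would verify carefully before anything else.
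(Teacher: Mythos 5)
The paper itself gives no proof of this lemma: it is imported verbatim from the cited references (with the self-dual special cases attributed to earlier work), so there is no internal argument to compare against. Your proposal is a correct, self-contained derivation along the standard route those references follow: the Lagrange-interpolation identity $\sum_{i=1}^{n}u_iP(a_i)=[x^{n-1}]P$ for $\deg P\leq n-1$, the observation that $(u_1,\ldots,u_n)$ spans the one-dimensional annihilator of the evaluation image of polynomials of degree at most $n-2$, and a dimension count reducing equality of the two codes to their mutual orthogonality. Your treatment of the $\infty$-coordinate --- isolating the top-degree contribution $f_{m-1}g_{n-m}$ and then forcing $\lambda=-1$ by substituting $f=x^{m-1}$, $g=x^{n-m}$ --- is exactly the mechanism that makes the extended criterion rigid rather than projective, and you identify it correctly. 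One imprecision worth fixing in the converse of (ii): you say you ``restrict to $f,g$ of strictly smaller degree.'' If \emph{both} degrees are lowered, the products $fg$ only span polynomials of degree at most $n-3$, whose annihilator is two-dimensional (spanned by $(u_i)_i$ and $(u_ia_i)_i$), which is not enough to conclude $v_i^2=\lambda u_i$. It suffices to lower the degree of only one factor, say take $\deg g\leq n-m-1$ while keeping $\deg f\leq m-1$ arbitrary: the term $f_{m-1}g_{n-m}$ still vanishes and $fg$ now realizes every degree up to $n-2$. With that reading the argument is complete and correct.
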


Denote by $E:=\{x^{p^e+1}|x\in\F_q^*\}$. In fact, $E$ is a multiplicative subgroup of $\F_q^*$ (see [\ref{Cao}]). We have the following result.

\begin{lemma}([\ref{Cao}, Lemma 4])\label{yC}
Let $q=p^h$ and $0\leq e\leq h-1$. Then $\F_{p^e}^*\subseteq E$ if and only if $2e\mid h$.
\end{lemma}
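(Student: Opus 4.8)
The plan is to exploit the cyclic structure of $\F_q^*$ and reduce the set containment $\F_{p^e}^*\subseteq E$ to a single divisibility condition on subgroup orders. First I would record that $E$ is the image of the group endomorphism $x\mapsto x^{p^e+1}$ of the cyclic group $\F_q^*$ of order $q-1$; since the kernel $\{x:x^{p^e+1}=1\}$ has order $\gcd(p^e+1,\,q-1)$, the subgroup $E$ is itself cyclic of order $|E|=\frac{q-1}{\gcd(p^e+1,\,q-1)}$. Because $\F_q^*$ is cyclic, for any two of its subgroups $H_1,H_2$ one has $H_1\subseteq H_2$ if and only if $|H_1|\mid|H_2|$. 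Applying this with $H_1=\F_{p^e}^*$ (of order $p^e-1$) and $H_2=E$ reduces the lemma to showing that $(p^e-1)\mid\frac{q-1}{\gcd(p^e+1,\,q-1)}$ holds precisely when $2e\mid h$. Here I tacitly use that $\F_{p^e}$ is a subfield of $\F_q$, i.e. $e\mid h$; if $e\nmid h$ then $2e\nmid h$ as well and there is nothing to prove.

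The heart of the matter is therefore the evaluation of $\gcd(p^e+1,\,q-1)$, which I expect to be the main obstacle. Writing $h=em$ and using $p^e\equiv-1\pmod{p^e+1}$, I get $q-1=p^{em}-1\equiv(-1)^m-1\pmod{p^e+1}$. If $m$ is even this is $\equiv0$, so $(p^e+1)\mid(q-1)$ and $\gcd(p^e+1,\,q-1)=p^e+1$; if $m$ is odd it is $\equiv-2$, and since $p$ odd forces $p^e+1$ to be even, $\gcd(p^e+1,\,q-1)=\gcd(p^e+1,2)=2$. This parity split is the key computational step.

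With the gcd in hand the two cases are routine. When $m$ is even, $|E|=\frac{q-1}{p^e+1}$, and $(p^e-1)\mid|E|$ is equivalent to $(p^{2e}-1)\mid(p^{em}-1)$, i.e. to $2e\mid em$, which holds because $m$ is even; thus $2e\mid h$ yields $\F_{p^e}^*\subseteq E$, the ``if'' direction. When $m$ is odd, $|E|=\frac{q-1}{2}$, and the required $(p^e-1)\mid\frac{p^{em}-1}{2}$ would force $2(p^e-1)\mid p^{em}-1=(p^e-1)S$ with $S=1+p^e+\cdots+p^{(m-1)e}$; but $S$ is a sum of $m$ odd terms, so $S\equiv m\equiv1\pmod2$ is odd, whence $2\nmid S$ and the divisibility fails. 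Hence $\F_{p^e}^*\subseteq E$ forces $m$ even, i.e. $2e\mid h$, which is the ``only if'' direction and finishes the proof.

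As an aside, the ``if'' direction also admits a conceptual one-line argument that I would mention: when $2e\mid h$ the subfield $\F_{p^{2e}}$ lies in $\F_q$, and the norm map $N_{\F_{p^{2e}}/\F_{p^e}}(x)=x^{p^e+1}$ is surjective onto $\F_{p^e}^*$, so every element of $\F_{p^e}^*$ is a $(p^e+1)$-th power in $\F_q^*$ and thus lies in $E$.
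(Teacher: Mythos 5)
Your proof is correct and complete. Note that the paper itself offers no proof of this lemma --- it is quoted verbatim from [\ref{Cao}, Lemma 4] --- so there is nothing to compare against line by line; on its own merits your argument is sound: the reduction of the containment to the divisibility $(p^e-1)\mid\frac{q-1}{\gcd(p^e+1,\,q-1)}$ via the uniqueness of subgroups of a given order in a cyclic group, the parity split $\gcd(p^e+1,p^{em}-1)=p^e+1$ or $2$ according to whether $m$ is even or odd, and the observation that $S=1+p^e+\cdots+p^{(m-1)e}\equiv m\pmod 2$ kills the odd case are all correct. The norm-map remark at the end is a clean way to see the ``if'' direction, which is the only direction the paper actually invokes (it always assumes $2e\mid h$); the only loose end is the degenerate value $e=0$ permitted by the statement, where $\F_{p^0}^*$ is not meaningful --- that is a defect of the quoted lemma rather than of your proof.
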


At the end of this section, we define two polynomials $\Psi_B(x)$ and $\Delta_B(x)$ over $\F_q$ as
\begin{center}
$\Psi_B(x)=\prod\limits_{b\in B}(x-b)$ and $\Delta_B(x)=\Psi_B'(x)$
\end{center}
for any $B\subseteq\F_q$. It is easy to see that $\Delta_B(b)=\Psi_B'(b)=\prod\limits_{b'\in B,b'\neq b}(b-b')$, for any $b\in B$.

\section{Some constructions of MDS codes with $e$-Galois hulls of arbitrary dimensions}

 \quad\; In this section, we will present several new MDS codes with $e$-Galois hulls of arbitrary dimensions via (extended) GRS codes.

 \begin{theorem}\label{thmE}
Let $q=p^{em}$ with $p$ odd prime. Assume that $t\mid(p^{e}-1)$, $m$ is even, $r\leq m-1$ and $n=tp^{er}$. Then for any
$1\leq k\leq \lfloor\frac{p^e+n-1}{p^e+1}\rfloor$ and $0\leq l\leq k-1$, there exists an $[n+1,k]_q$ MDS code $\mathcal{C}$ with $l$-dimensional $e$-Galois hull.
 \end{theorem}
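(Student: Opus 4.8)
The plan is to construct the code explicitly as an extended GRS code $\mathbf{GRS}_{k}(\mathbf{a},\mathbf{v},\infty)$ and to control the dimension of its $e$-Galois hull by the choice of evaluation points $\mathbf{a}$ and scaling vector $\mathbf{v}$. The guiding principle, supplied by Lemma \ref{HE1}(ii), is that a codeword $\mathbf{c}=(v_1 f(a_1),\dots,v_n f(a_n),f_{k-1})$ lies in $\mathcal{C}\cap\mathcal{C}^{\bot_e}$ precisely when the twisted evaluation vector $(v_i^{p^e+1}f^{p^e}(a_i))_i$ together with the trailing coordinate $f_{k-1}^{p^e}$ matches an evaluation vector $(u_i g(a_i))_i$ of a polynomial $g$ of degree $\le n-k$ (with a matching condition on the leading coefficient $-g_{n-k}$). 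So the strategy is: first choose the evaluation points as a nice additive/multiplicative structured set so that the $u_i$ are computable and essentially constant (up to a controllable factor); then choose $\mathbf{v}$ so that $v_i^{p^e+1}$ equals (a scalar multiple of) $u_i$; and finally arrange a splitting of the coordinate set into a ``hull part'' of size governed by $l$ and a complementary part.

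Concretely, the exponents in the hypothesis — $q=p^{em}$ with $m$ even, $t\mid(p^e-1)$, $n=tp^{er}$ with $r\le m-1$ — are engineered so that the relevant multiplicative group structure cooperates. Since $2e\mid em$ (because $m$ is even), Lemma \ref{yC} gives $\F_{p^e}^*\subseteq E=\{x^{p^e+1}:x\in\F_q^*\}$, which is exactly what lets us take $(p^e+1)$-th roots of the target values for $v_i$. First I would take the evaluation set $A$ to be a union of cosets of an $\F_{p^e}$-subspace (an additive subgroup of $\F_q$ of size $p^{er}$), scaled by a set of $t$ representatives coming from $\F_{p^e}^*$; this is the natural way to realize $n=tp^{er}$. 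With $A=\bigcup_{j} \beta_j V$ for a fixed $\F_{p^e}$-subspace $V$ of dimension $r$ and $t$ suitable $\beta_j$, the derivative products $u_i=\prod_{a_j\neq a_i}(a_i-a_j)^{-1}$ collapse nicely because $\Delta_V$ and $\Delta_A$ (using the notation $\Psi_B,\Delta_B$ from the preliminaries) factor through the subspace structure; I would compute $u_i$ explicitly in terms of $\Delta_V(0)$ and the $\beta_j$, showing $u_i$ takes values in (a coset of) $\F_{p^e}^*$ up to a single global constant.

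With the $u_i$ pinned down in $\F_{p^e}^*$, the key move is to exploit $\F_{p^e}^*\subseteq E$: for each $i$ I can select $v_i\in\F_q^*$ with $v_i^{p^e+1}=c\,u_i$ for a common constant $c$, on a chosen subset $I_0\subseteq\{1,\dots,n\}$ of the coordinates, while on the complement $I_1$ I perturb the scaling so that the matching condition of Lemma \ref{HE1}(ii) fails. The dimension of the $e$-Galois hull is then read off as the number of evaluation points where the ``self-orthogonality'' polynomial condition holds; by tuning $|I_0|$ I would force the hull dimension to be any prescribed $l$ with $0\le l\le k-1$. The Singleton/MDS property is automatic since any $\mathbf{GRS}_{k}(\mathbf{a},\mathbf{v},\infty)$ is MDS of length $n+1$, so no separate distance argument is needed.

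The main obstacle I anticipate is twofold. First, the bookkeeping in computing $u_i$ over the structured set $A$: ensuring the products $\prod_{j\neq i}(a_i-a_j)$ really do reduce to a value in $\F_{p^e}^*$ (times a universal constant) requires the coset representatives $\beta_j$ to be chosen so that cross-coset differences also behave, and this is where the condition $t\mid(p^e-1)$ and $r\le m-1$ must be used to guarantee the evaluation points are distinct and the group-theoretic cancellation goes through. Second, and more delicate, is simultaneously (a) realizing an arbitrary hull dimension $l$ rather than just a self-orthogonal (full-hull) or LCD (zero-hull) extreme, and (b) handling the extra $\infty$ coordinate and its leading-coefficient constraint $f_{k-1}^{p^e}=-g_{n-k}$ in Lemma \ref{HE1}(ii), which couples the top-degree behavior to the finite-coordinate matching. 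The plan is to isolate the hull as the image of a subcode cut out by forcing $f$ to vanish on $I_1$ (reducing the degree budget and hence the effective dimension to exactly $l$), and to check that the $\infty$-coordinate condition is compatible with this subcode; the bound $k\le\lfloor\frac{p^e+n-1}{p^e+1}\rfloor$ is exactly the constraint ensuring $\deg g\le n-k$ is large enough to accommodate the construction.
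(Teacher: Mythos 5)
Your overall strategy---an extended GRS code, Lemma \ref{HE1}(ii), a structured evaluation set making the $u_i$ essentially constant, the choice $v_i^{p^e+1}=\lambda u_i$ enabled by $\F_{p^e}^*\subseteq E$ (Lemma \ref{yC}, via $2e\mid em$), and a perturbation by $\alpha$ with $\alpha^{p^e+1}\neq 1$ on $s=k-1-l$ coordinates to cut the hull down to dimension $l$---is exactly the paper's. The degree bookkeeping you sketch (the bound $k\le\lfloor\frac{p^e+n-1}{p^e+1}\rfloor$ gives $p^e(k-1)\le n-k-1$, forcing $g=\lambda f^{p^e}$, hence $f_{k-1}=0$ and $g_{n-k}=0$, so the $\infty$-coordinate takes care of itself) is also the correct mechanism and is what the paper does.

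However, your construction of the evaluation set does not work as written. You take $A=\bigcup_{j}\beta_j V$ with $V$ an $\F_{p^e}$-subspace of size $p^{er}$ and the $\beta_j$ drawn from $\F_{p^e}^*$. Since $V$ is $\F_{p^e}$-linear, $\beta_j V=V$ for every $\beta_j\in\F_{p^e}^*$, so your union collapses to $V$ itself and has only $p^{er}$ elements, not $n=tp^{er}$; no choice of multiplicative representatives in $\F_{p^e}^*$ can separate the pieces. The paper instead uses \emph{additive} translates: choose $V$ with $V\cap\F_{p^e}=0$ (possible because $r\le m-1$), let $\omega\in\F_{p^e}^*$ be a primitive $t$-th root of unity (this is where $t\mid(p^e-1)$ enters), and set $V_j=\omega^{j}+V$. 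Disjointness is immediate since $\omega^{j_0}-\omega^{j_1}$ is a nonzero element of $\F_{p^e}$ and hence cannot lie in $V$. The product formula then gives $u_i=\omega^{-j_0p^{er}}\cdot b$ for a single global constant $b$, so with $\lambda=b^{-1}$ one gets $\lambda u_i=\omega^{-j_0p^{er}}\in\F_{p^e}^*\subseteq E$, and the remainder of your argument goes through verbatim. So the gap is localized but real: replace the multiplicative cosets by additive translates and carry out the resulting computation of $u_i$.
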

 \begin{proof}
Let $V$ be an $r$-dimensional $\mathbb{F}_{p^e}$-vector subspace in $\mathbb{F}_{q}$ with $V\cap\mathbb{F}_{p^e}=0$. Choose $\omega\in\mathbb{F}_{p^e}^*$,
which is a $t$-th primitive root of unity. Let $V_j=\omega^{j}+V(j=0,1,\ldots,t-1)$ and $\bigcup\limits_{j=0}^{t-1}V_{j}=\{a_1,\ldots,a_n\}$. For any $a_i\in V_{j_0}$,
\begin{equation*}
\begin{aligned}
u_i&=\Delta_{V_{j_0}}(a_i)\prod_{j=0,j\neq j_0}^{t-1}\Psi_{V_j}(a_i)&\\
&=\left(\prod_{0\neq \gamma\in V}\gamma\right)\cdot\left(\prod_{j=0,j\neq j_0}^{t-1}\prod_{\gamma\in V}(\omega^{j_0}-\omega^{j}+\gamma)\right)&\\
&=\left(\prod_{0\neq \gamma\in V}\gamma\right)\cdot\left(\prod_{\gamma\in V}\omega^{j_0(t-1)}\prod_{d=1}^{t-1}\left(1+\omega^{-j_0}\gamma-\omega^{d}\right)\right)&\\
&=\omega^{-j_0p^{er}}\cdot\left(\prod_{0\neq \gamma\in V}\gamma\right)\cdot\left(\prod_{\gamma\in V}\prod_{d=1}^{t-1}(1+\gamma-\omega^{d})\right)&
\end{aligned}
\end{equation*}
where $d=j-j_0$ and the last equality follows from that $\prod\limits_{\gamma\in V}\omega^{j_0(t-1)}=\omega^{-j_0p^{er}}$ and $\omega^{-j_0}\gamma$
runs through $V$ when $\gamma$ runs through $V$.

Let $b=\left(\prod\limits_{0\neq \gamma\in V}\gamma\right)\cdot\left(\prod\limits_{\gamma\in V}\prod\limits_{d=1}^{t-1}(1+\gamma-\omega^{d})\right)$,
which is independent of $i$. It follows that $u_i=\omega^{-j_0p^{er}}\cdot b$, for any $1\leq i\leq n$. Choose $\lambda=b^{-1}\in\F_q^*$. Since
$\omega\in\F_{p^e}^*$ and Lemma \ref{yC}, then $\lambda u_i\in\F_{p^e}^*\subseteq E$ with $1\leq i\leq n$. Let $v_{i}^{p^e+1}=\lambda u_i(1\leq i\leq n)$. Choose
\begin{center}
$\mathbf{a}=(a_1,\ldots,a_{n})$ and
$\mathbf{v}=(\alpha v_{1},\alpha v_{2},\ldots,\alpha v_{s},v_{s+1},\ldots,v_{n})$,
\end{center}
where $\alpha\in\mathbb{F}_{q}^*$ and $\alpha^{p^e+1}\neq 1$. Denote by $\mathcal{C}:=\mathbf{GRS}_k(\mathbf{a},\mathbf{v},\infty)$ and $\xi=\alpha^{p^e+1}$. For any
\begin{center}
$\mathbf{c}=(\alpha v_{1}f(a_{1}),\ldots,\alpha v_{s}f(a_{s}),v_{s+1}f(a_{s+1}), \ldots,v_{n}f(a_{n}),f_{k-1})\in Hull_e(\mathcal{C})$
\end{center}
with $\deg(f(x))\leq k-1$, by Lemma \ref{HE1} (ii), there exists a polynomial $g(x)\in \mathbb{F}_{q}[x]$ with $\deg(g(x))\leq n-k$ such that
\begin{equation}\label{xi}
\begin{aligned}
&(\xi v_{1}^{p^e+1}(a_1)f^{p^e}(a_{1}),\cdots,\xi v_{s}^{p^e+1}(a_s)f^{p^e}(a_{s}),v_{s+1}^{p^e+1}(a_{s+1})f^{p^e}(a_{s+1}),\cdots,
 v_{n}^{p^e+1}(a_{n})f^{p^e}(a_{n}),f^{p^e}_{k-1})&\\
=&(u_{1}g(a_{1}),\cdots,u_{s}g(a_{s}),u_{s+1}g(a_{s+1}),\cdots,u_{n}g(a_{n}),-g_{n-k}).&
\end{aligned}
\end{equation}
From $(\ref{xi})$ and $v_i^{p^e+1}=\lambda u_i$($1\leq i\leq n$), we derive
\begin{equation}\label{xi2}
\begin{aligned}
&(\xi\lambda u_{1}f^{p^e}(a_{1}),\cdots,\xi\lambda u_{s}f^{p^e}(a_s),\lambda u_{s+1}f^{p^e}(a_{s+1}),\cdots, \lambda u_{n}f^{p^e}(a_{n}),f^{p^e}_{k-1})&\\
=&(u_{1}g(a_{1}),\cdots,u_{s}g(a_{s}),u_{s+1}g(a_{s+1}),\cdots,u_{n}g(a_{n}),-g_{n-k}).&
\end{aligned}
\end{equation}
When $s+1\leq i\leq n$, we get $\lambda f^{p^e}(a_i)=g(a_i)$. Note that $\deg(f^{p^e}(x))\leq p^e(k-1)\leq n-k-1$ from
$k\leq \lfloor\frac{p^e+n-1}{p^e+1}\rfloor$ and $\deg(g(x))\leq n-k$. It implies that $\lambda f^{p^e}(x)=g(x)$ from $n-s\geq n-k+1$.

Assume that $f_{k-1}\neq 0$. By $\lambda f^{p^e}(x)=g(x)$, we have $\deg(f^{p^e}(x))=\deg(g(x))$, that is, $p^e(k-1)=n-k$,
which yields a contradiction to $k\leq \lfloor\frac{p^e+n-1}{p^e+1}\rfloor$. Hence, $f_{k-1}=0$, which implies that $\deg(f(x))\leq k-2$.

Comparing the first $s$ coordinates on both sides of (\ref{xi2}), we have $$\xi\lambda u_if^{p^e}(a_i)= u_ig(a_i)= \lambda u_if^{p^e}(a_i)$$ for $i=1,\ldots,s$.
Hence $f^{p^e}(a_i)=0$, i.e., $f(a_i)=0$ for $i=1,\ldots,s$ with $\xi\neq 1$. Then
\begin{equation*}
f(x)=c(x)\prod_{i=1}^{s}(x-a_{i}),
\end{equation*}
for some $c(x)\in \mathbb{F}_{q}[x]$ with $\deg(c(x))\leq k-2-s$. It follows that $\dim(Hull_e(\mathcal{C}))\leq k-1-s$.

Conversely, put $f(x)=c(x)\prod\limits_{i=1}^{s}(x-a_{i})$, where $c(x)\in \mathbb{F}_{q}[x]$ and $\deg(c(x))\leq k-2-s$, which yields $f_{k-1}=0$.
Assume that $g(x)=\lambda f^{p^e}(x)$, then $\deg(g(x))\leq n-k-1$, which yields $g_{n-k}=0$. Therefore,
\begin{equation*}
\begin{aligned}
&(\xi\lambda u_{1}f^{p^e}(a_{1}),\ldots,\xi\lambda u_{s}f^{p^e}(a_{s}),\lambda u_{s+1}f^{p^e}(a_{s+1}),\ldots,\lambda u_{n}f^{p^e}(a_{n}),0)&\\
=&(u_{1}g(a_{1}),\ldots,u_{s}g(a_{s}),u_{s+1}g(a_{s+1}),\ldots,u_{n}g(a_{n}),0).&
\end{aligned}
\end{equation*}
According to Lemma \ref{HE1} (ii),
\begin{center}
$(\alpha v_{1}f(a_{1}),\ldots,\alpha v_{s}f(a_{s}), v_{s+1}f(a_{s+1}),\ldots,  v_{n}f(a_{n}),0)\in Hull_e(\mathcal{C})$.
\end{center}
Thus $\dim(Hull_e(\mathcal{C}))\geq k-1-s$.

Consequently, $\dim(Hull_e(\mathcal{C}))=k-1-s=l$. The proof has been completed.
\end{proof}

Let $1\leq t\leq p^{e}$. We fix an $\F_p$-linear subspace $B\subseteq \F_{p^e}$, such that $|B|\geq t$. Set $b_1=0,b_2,\ldots,b_t$ be $t$ distinct
elements of $B$. Put $Tr$ represents trace function from $\F_q$ to $\F_{p^e}$, with $q=p^h$ and $e\mid h$. Define
\begin{center}
$T_i:=\{x\in\F_q: Tr(x)=b_i\}$,
\end{center}
where $1\leq i\leq t$. Then $|T_i|=p^{h-e}$ and $T_i\bigcap T_j=\emptyset$, for any $1\leq i\neq j\leq t$. The following result can be obtained easily.

\begin{lemma}\label{Psi}
The symbols are the same as above. Then\, $\Delta_{T_i}(x)=1$.
\end{lemma}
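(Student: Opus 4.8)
The plan is to identify the polynomial $\Psi_{T_i}(x)$ explicitly, after which computing its derivative is immediate. Recall that the trace map from $\F_q$ to $\F_{p^e}$ is given by the polynomial
\[
Tr(x)=\sum_{j=0}^{h/e-1}x^{p^{je}},
\]
which is monic of degree $p^{h-e}$. First I would observe that, by its very definition, $T_i=\{x\in\F_q:Tr(x)=b_i\}$ is precisely the set of roots of $Tr(x)-b_i$. Since this polynomial is monic of degree $p^{h-e}$ and, as already recorded in the setup, $|T_i|=p^{h-e}$, it has exactly as many distinct roots as its degree. Hence all its roots are simple and it splits completely, giving $Tr(x)-b_i=\prod_{c\in T_i}(x-c)=\Psi_{T_i}(x)$.

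Having established $\Psi_{T_i}(x)=Tr(x)-b_i$, the conclusion follows by differentiating. Because $b_i\in\F_{p^e}$ is a constant, $\Delta_{T_i}(x)=\Psi_{T_i}'(x)=Tr'(x)$. Writing $Tr(x)=x+\sum_{j=1}^{h/e-1}x^{p^{je}}$, every exponent $p^{je}$ with $j\geq 1$ is divisible by $p$, so the formal derivative of each such term vanishes in characteristic $p$; only the linear term $x$ survives and contributes $1$. Therefore $\Delta_{T_i}(x)=1$, as claimed.

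The only genuinely delicate point is the identity $\Psi_{T_i}(x)=Tr(x)-b_i$. It rests on the counting $|T_i|=p^{h-e}$ (which comes from the surjectivity of $Tr$ onto $\F_{p^e}$ and the kernel having index $p^e$) together with the elementary fact that a degree-$d$ monic polynomial possessing $d$ distinct roots must equal the product of the corresponding distinct linear factors. Equivalently, one may argue via linearized polynomials: $T_1=\ker Tr$ is an $\F_p$-subspace, so $\Psi_{T_1}(x)$ is $p$-linearized, and each $T_i=t_i+\ker Tr$ is a coset with $\Psi_{T_i}(x)=\Psi_{T_1}(x-t_i)$; either route reduces the derivative to the constant $1$. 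I expect no further obstacle, since the characteristic-$p$ vanishing of the derivative of $x^{p^{je}}$ for $j\geq 1$ is routine.
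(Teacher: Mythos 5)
Your proposal is correct and follows essentially the same route as the paper: identify $\Psi_{T_i}(x)=Tr(x)-b_i$ by comparing the root set and the degree, then differentiate and note that every term $x^{p^{je}}$ with $j\geq 1$ has vanishing formal derivative in characteristic $p$. The only difference is that you spell out the counting argument behind the identity $\Psi_{T_i}(x)=Tr(x)-b_i$, which the paper dismisses as ``easy to get.''
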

\begin{proof}
By the definition of $T_i$ and $Tr$, it is easy to get
\begin{equation}\label{Psiequ}
\Psi_{T_i}(x)=\prod\limits_{\alpha\in T_i}(x-\alpha)=Tr(x)-b_i=x+x^{p^e}+\cdots+x^{p^{h-e}}-b_i.
\end{equation}
Take the derivative of both sides of (\ref{Psiequ}), $$\Delta_{T_i}(x)=\Psi_{T_i}'(x)=1.$$
\end{proof}

\begin{remark}
The case $e=\frac{h}{2}$ has been shown in [\ref{FXF}].
\end{remark}

Now, we give the second construction.
\begin{theorem}
Let $q=p^h$, $2e\mid h$ and $n=tp^{h-e}$ with $1\leq t\leq p^{e}$. For any $1\leq k\leq \lfloor\frac{p^e+n-1}{p^e+1}\rfloor$ and $0\leq l\leq k-1$,
there exists an $[n+1,k]_q$ MDS code $\mathcal{C}$ with $l$-dimensional $e$-Galois hull.
\end{theorem}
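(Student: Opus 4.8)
The plan is to imitate the proof of Theorem~\ref{thmE} line for line, replacing the multiplicative coset structure used there by the additive trace-fibre structure $\{T_i\}$ introduced just before Lemma~\ref{Psi}; this will actually make the computation of the $u_i$ cleaner, since Lemma~\ref{Psi} collapses one of the two product factors to $1$. First I would fix $t$ distinct elements $b_1=0,b_2,\dots,b_t\in\F_{p^e}$ (possible because $t\le p^e$), form $T_i=\{x\in\F_q:Tr(x)=b_i\}$, and take the evaluation set $\{a_1,\dots,a_n\}=\bigcup_{i=1}^{t}T_i$. These are $n=tp^{h-e}$ distinct points with $n\le p^e\cdot p^{h-e}=q$, so an extended GRS code of length $n+1$ is available.

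The key computation is that of $u_i$. For $a_i\in T_{i_0}$ we have $Tr(a_i)=b_{i_0}$, so by $(\ref{Psiequ})$, $\Psi_{T_j}(a_i)=Tr(a_i)-b_j=b_{i_0}-b_j$, while Lemma~\ref{Psi} gives $\Delta_{T_{i_0}}(a_i)=1$. Hence
\[
u_i=\Delta_{T_{i_0}}(a_i)\prod_{j\neq i_0}\Psi_{T_j}(a_i)=\prod_{\substack{1\le j\le t\\ j\neq i_0}}(b_{i_0}-b_j)\in\F_{p^e}^*,
\]
the membership in $\F_{p^e}^*$ being immediate since the $b_j$ are distinct elements of $\F_{p^e}$. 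This is the main gain over Theorem~\ref{thmE}: there a nonconstant factor $b$ had to be absorbed into a normalizing scalar $\lambda$, whereas here $u_i$ already lies in $\F_{p^e}^*$, so I may simply take $\lambda=1$. Because $2e\mid h$, Lemma~\ref{yC} yields $\F_{p^e}^*\subseteq E$, so each $u_i$ is a $(p^e+1)$-th power; I fix $v_i\in\F_q^*$ with $v_i^{p^e+1}=u_i$.

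Next, to hit the prescribed hull dimension $l$, set $s=k-1-l$ (so $0\le s\le k-1$), choose $\alpha\in\F_q^*$ with $\xi:=\alpha^{p^e+1}\neq1$, and put $\mathbf{v}=(\alpha v_1,\dots,\alpha v_s,v_{s+1},\dots,v_n)$ and $\mathcal{C}=\mathbf{GRS}_k(\mathbf{a},\mathbf{v},\infty)$. From this point the argument is verbatim that of Theorem~\ref{thmE}. Applying Lemma~\ref{HE1}(ii) to a hull codeword built from $f$ with $\deg f\le k-1$, the unscaled coordinates give $f^{p^e}(a_i)=g(a_i)$ for $s+1\le i\le n$; since $\deg f^{p^e}\le p^e(k-1)\le n-k-1$ (from $k\le\lfloor\frac{p^e+n-1}{p^e+1}\rfloor$) while $n-s\ge n-k+1$ exceeds $\deg g\le n-k$, interpolation forces $f^{p^e}(x)=g(x)$ as polynomials. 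The resulting degree equality would contradict the bound on $k$ unless $f_{k-1}=0$, so $\deg f\le k-2$; and comparing the first $s$ coordinates gives $\xi f^{p^e}(a_i)=f^{p^e}(a_i)$, whence $f(a_i)=0$ for $1\le i\le s$ because $\xi\neq1$. Thus $f(x)=c(x)\prod_{i=1}^{s}(x-a_i)$ with $\deg c\le k-2-s$, giving $\dim Hull_e(\mathcal{C})\le k-1-s$; the reverse inequality follows from the same explicit substitution as in Theorem~\ref{thmE}, so $\dim Hull_e(\mathcal{C})=k-1-s=l$.

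I expect no real obstacle beyond the $u_i$ computation above: once Lemma~\ref{Psi} is invoked the product reduces to differences of the $b_j$, landing in $\F_{p^e}^*$ with no normalization, and everything afterward is the polynomial-degree bookkeeping already carried out for the extended case in Theorem~\ref{thmE}. The only routine checks are that $n\le q$ (guaranteed by $t\le p^e$) and that $s=k-1-l$ lies in $[0,k-1]$ (immediate from $0\le l\le k-1$), both of which are trivial.
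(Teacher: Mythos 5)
Your proposal matches the paper's proof essentially line for line: the same trace-fibre evaluation set $\bigcup_i T_i$, the same computation $u_i=\prod_{j\neq i_0}(Tr(a_i)-b_j)\in\F_{p^e}^*\subseteq E$ via Lemma~\ref{Psi} and Lemma~\ref{yC}, and the same reduction of the hull-dimension argument to the one already carried out in Theorem~\ref{thmE} (which is exactly how the paper phrases it). The only cosmetic difference is that you take the $b_j$ as arbitrary distinct elements of $\F_{p^e}$ rather than inside an $\F_p$-linear subspace $B$, which is immaterial since only distinctness and membership in $\F_{p^e}$ are used.
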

\begin{proof}
Since $T_i$ and $T_j$ are pairwise disjoint for any $1\leq i\neq j\leq t$, it follows that $|\bigcup\limits_{i=1}^t T_i|=tp^{h-e}=n$. Denote by
$A=\bigcup\limits_{i=1}^t T_i=\{a_1,\ldots,a_n\}$. For any $a_i\in T_{j_0}$,  by Lemma \ref{Psi}, we have
\begin{equation}
\begin{aligned}
u_i=\Delta_{A}(a_i)=\Delta_{T_{j_0}}(a_i)\cdot\left(\prod_{j\neq j_0,j=1}^t\Psi_{T_j}(a_i)\right)=\prod_{j\neq j_0,j=1}^t(Tr(a_i)-b_j).
\end{aligned}
\end{equation}
From the definition of $Tr$, we know $Tr(a_i)\in\F_{p^e}$($1\leq i\leq n$). Since $b_j\in B\subseteq\F_{p^e}$($1\leq j\leq t$),
one has $u_i\in\F_{p^e}$. By the definition of $u_i$, it is easy to see that $u_i\neq 0$ with $1\leq i\leq n$. Therefore, for any $1\leq i\leq n$,
it takes $u_i\in\F_{p^e}^*\subseteq E$ by Lemma \ref{yC}. Let $v_{i}^{p^e+1}= u_i$($1\leq i\leq n$) and $s:=k-1-l$. Choose
\begin{center}
$\mathbf{a}=(a_1,\ldots,a_{n})$ and
$\mathbf{v}=(\alpha v_{1},\alpha v_{2},\ldots,\alpha v_{s},v_{s+1},\ldots,v_{n})$,
\end{center}
where $\alpha\in\mathbb{F}_{q}^*$ and $\alpha^{p^e+1}\neq 1$. Consider the $e$-Galois hull of the $[n+1,k]_q$ MDS code
$\mathcal{C}:=\mathbf{GRS}_{k}(\mathbf{a},\mathbf{v},\infty)$. Similarly as the proof of Theorem \ref{thmE}, we can obtain the result.
\end{proof}

\section{Constructions via MDS codes with special Euclidean orthogonal property}

\quad\; When $\frac{h}{e}$ is odd, we can make the constructions from MDS codes satisfying Euclidean orthogonal properties.

\begin{theorem}\label{via GRS}
Let $q=p^h$, where $p$ is an odd prime. Assume $1\leq m\leq \lfloor\frac{n}{2}\rfloor$ and $\frac{h}{e}$ is odd. Suppose
$$\mathbf{GRS}_{m}(\mathbf{a},\mathbf{v})^\perp=\mathbf{GRS}_{n-m}(\mathbf{a},\mathbf{v}).$$
For any $0\leq l\leq k \leq \lfloor\frac{p^e+n-1}{p^e+1}\rfloor$,
there exists a $q$-ary $[n,k]$ MDS code $\mathcal{C}$ with $\dim(Hull_e(\mathcal{C}))=l$.
\end{theorem}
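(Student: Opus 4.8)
I want to prove Theorem 4.1. We are given a GRS code that is Euclidean self-dual (in the sense that $\mathbf{GRS}_m(\mathbf{a},\mathbf{v})^\perp = \mathbf{GRS}_{n-m}(\mathbf{a},\mathbf{v})$). By Lemma 2.2(i), this is equivalent to having $\lambda \in \F_q^*$ with $\lambda u_i = v_i^2$ for all $i$. We want to produce, for any target hull dimension $l$, an $[n,k]$ MDS code with $e$-Galois hull of dimension exactly $l$.

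Let me think about what the Euclidean condition $\lambda u_i = v_i^2$ buys us. We want to understand the $e$-Galois hull via Lemma 2.1(i): a codeword $(v_i f(a_i))$ lies in the hull iff there is $g$ of degree $\le n-k-1$ with $v_i^{p^e+1} f^{p^e}(a_i) = u_i g(a_i)$ for all $i$. So I need to control $v_i^{p^e+1}/u_i$.

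Let me trace how $v_i^{p^e+1}$ relates to $u_i$. From $v_i^2 = \lambda u_i$ we get $v_i^{p^e+1} = v_i \cdot v_i^{p^e} = v_i \cdot (v_i^2)^{(p^e-1)/2} \cdot v_i^{1-1}$... let me be careful. Write $v_i^{p^e+1} = (v_i^2)^{(p^e+1)/2}$ — but $p^e+1$ is even (since $p$ is odd), so $(p^e+1)/2$ is an integer, and $v_i^{p^e+1} = (v_i^2)^{(p^e+1)/2} = (\lambda u_i)^{(p^e+1)/2} = \lambda^{(p^e+1)/2} u_i^{(p^e+1)/2}$.

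Now the condition for the hull becomes $\lambda^{(p^e+1)/2} u_i^{(p^e+1)/2} f^{p^e}(a_i) = u_i g(a_i)$, i.e., $\lambda^{(p^e+1)/2} u_i^{(p^e-1)/2} f^{p^e}(a_i) = g(a_i)$. The factor $u_i^{(p^e-1)/2}$ is a problem — it depends on $i$ and is not obviously a polynomial evaluation. The key observation I expect to need: from $\lambda u_i = v_i^2$, the quantity $u_i^{(p^e-1)/2}$ equals $(\lambda^{-1} v_i^2)^{(p^e-1)/2} = \lambda^{-(p^e-1)/2} v_i^{p^e-1}$. Since $\frac{h}{e}$ is odd, I suspect $u_i$ itself (or $v_i^2$) lies in a subfield fixed well enough that $u_i^{(p^e-1)/2}$ becomes a constant times a $p^e$-th-power–friendly factor. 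Actually, I think the right move is to define new evaluation data $\mathbf{v}' $ with $(v_i')^{p^e+1} = \mu u_i$ for a single constant $\mu$; then the hull analysis reduces to exactly the form handled in Theorem 3.1. So the real content is: \emph{use the Euclidean self-duality to manufacture a vector $\mathbf{v}'$ satisfying the $e$-Galois self-orthogonality condition $(v_i')^{p^e+1} = \mu u_i$ with $\mu$ constant}, after which the dimension-$l$ construction is identical to Theorem 3.1 (introduce the scalar $\alpha$ on the first $s = k-1-l$ coordinates... but note we are in the non-extended GRS case, so $s$ will be $k-l$ and there is no $f_{k-1}$ coordinate to kill).

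The hard part, and where I would concentrate, is showing that the Euclidean orthogonality ($v_i^2 = \lambda u_i$) can be upgraded to an $e$-Galois orthogonality ($(v_i')^{p^e+1} = \mu u_i$) precisely because $\frac{h}{e}$ is odd. My plan for this step: since $p^e+1$ is even, $E = \{x^{p^e+1}\}$ contains all squares of... more precisely, I want to show $\lambda u_i \in E$, i.e. that each $v_i^2$ is a $(p^e+1)$-th power. The oddness of $\frac{h}{e}$ should enter through the structure of the subgroup $E \le \F_q^*$: when $\frac{h}{e}$ is odd, $\gcd(p^e+1, q-1) = 2$, so $E$ is exactly the subgroup of index $2$, namely the squares $(\F_q^*)^2$. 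Thus $E = \{x^2 : x \in \F_q^*\}$, and every $v_i^2$ is automatically in $E$. This is the crucial field-theoretic fact. Once $E$ equals the squares, I can set $(v_i')^{p^e+1} = v_i^2 = \lambda u_i$, pick $\mu = \lambda$, and the remainder of the argument — splitting coordinates with $\alpha$ ($\alpha^{p^e+1}\ne 1$), forcing $f(a_i)=0$ on the first $s$ coordinates, and matching $\lambda' f^{p^e} = g$ on the rest using $\deg f^{p^e} \le p^e(k-1) \le n-k-1$ — copies Theorem 3.1 verbatim (in its unextended form). I would verify the $\gcd(p^e+1,q-1)=2$ claim carefully, since that single fact is the engine of the whole theorem.
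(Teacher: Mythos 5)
Your proposal is correct and takes essentially the same route as the paper: both arguments hinge on the single fact that $\gcd(p^e+1,q-1)=2$ when $\frac{h}{e}$ is odd and $p$ is odd, which the paper exploits via B\'ezout (writing $\mu(p^e+1)+\nu(q-1)=2$ and setting $v_i'=v_i^{\mu}$ so that $(v_i')^{p^e+1}=v_i^2=\lambda u_i$) rather than via your equivalent observation that $E$ is then the index-two subgroup of squares in $\F_q^*$. The remainder --- scaling the first $s=k-l$ coordinates by $\alpha$ with $\alpha^{p^e+1}\neq 1$, forcing $\lambda f^{p^e}=g$ from the last $n-s$ coordinates by the degree count $p^e(k-1)\leq n-k-1$, and concluding $f(a_i)=0$ on the first $s$ --- matches the paper's proof verbatim.
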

\begin{proof}
From $\mathbf{GRS}_{m}(\mathbf{a},\mathbf{v})^\perp=\mathbf{GRS}_{n-m}(\mathbf{a},\mathbf{v})$ and Lemma \ref{cor1} (i),
\begin{equation}\label{Euclidean}
v_i^{2}=\lambda u_i\neq 0(1\leq i\leq n),
\end{equation}
where $\lambda\in \mathbb{F}_{q}^*$.

Since $\frac{h}{e}$ and $p$ are odd, then $\gcd(p^e+1,p^h-1)=2$. So there exist two integers $\mu$ and $\nu$, such that $\mu(p^e+1)+\nu(p^h-1)=2$.
Substituting into (\ref{Euclidean}), we have $v_i^{\mu(p^e+1)}=\lambda u_i\neq 0(1\leq i\leq n)$. Set $v'_i=v_i^\mu(1\leq i\leq n)$. Then
$${v'}_i^{p^e+1}=\lambda u_i\neq 0(1\leq i\leq n).$$

Denote by $s:=k-l$, $\mathbf{a}=(a_{1},a_{2},\ldots,a_{n})$ and
$\mathbf{v}^{'}=(\alpha v'_{1},\alpha v'_{2},\ldots,\alpha v'_{s},v'_{s+1},\ldots,v'_{n})$, where $\alpha\in\mathbb{F}_{q}^*$ and $\alpha^{p^e+1}\neq 1$.
Consider the $e$-Galois hull of the $[n,k]_q$ MDS code $\mathcal{C}:=\mathbf{GRS}_{k}(\mathbf{a},\mathbf{v}')$. Then for any
\begin{center}
$\mathbf{c}=(\alpha v'_{1}f(a_{1}),\ldots,\alpha v'_{s}f(a_{s}),v'_{s+1}f(a_{s+1}),\ldots,v'_{n}f(a_{n}))\in Hull_e(\mathcal{C})$
\end{center}
with $\deg(f(x))\leq k-1$, according to Lemma \ref{HE1} (i), there exists a polynomial $g(x)\in \mathbb{F}_{q}[x]$ with $\deg(g(x))\leq n-k-1$ such that
\begin{equation*}
\begin{aligned}
&(\alpha^{p^e+1} {v'}_{1}^{p^e+1}f^{p^e}(a_{1}),\ldots,\alpha^{p^e+1}{v'}_{s}^{p^e+1}f^{p^e}(a_{s}),{v'}_{s+1}^{p^e+1}f^{p^e}(a_{s+1}),
\ldots,{v'}_{n}^{p^e+1}f^{p^e}(a_{n}))&\\=&(u_{1}g(a_{1}),\ldots,u_{s}g(a_{s}),u_{s+1}g(a_{s+1}),\ldots,u_{n}g(a_{n})).&
\end{aligned}
\end{equation*}
Set $\xi=\alpha^{p^e+1}$. Then
\begin{equation}\label{equ11}
\begin{aligned}
&(\xi\lambda u_{1}f^{p^e}(a_{1}),\ldots,\xi\lambda u_{s}f^{p^e}(a_{s}),\lambda u_{s+1}f^{p^e}(a_{s+1}),\ldots,
\lambda u_{n}f^{p^e}(a_{n}))&\\=&(u_{1}g(a_{1}),\ldots,u_{s}g(a_{s}),u_{s+1}g(a_{s+1}),\ldots,u_{n}g(a_{n})).&
\end{aligned}
\end{equation}
Considering the last $n-s$ coordinates of (\ref{equ11}), we get $\lambda f^{p^e}(a_i)=g(a_i)$($s+1\leq i\leq n$). Hence the number of distinct roots
of $\lambda f^{p^e}(x)-g(x)$ is at least $n-s\geq n-k$. Since $k\leq \lfloor\frac{p^e+n-1}{p^e+1}\rfloor$, we have
$\deg(f^{p^e}(x))\leq p^e(k-1)\leq n-k-1$, which derives that $\deg(\lambda f^{p^e}(x)-g(x))\leq n-k-1$ together with
$\deg(g(x))\leq n-k-1$. Hence $\lambda f^{p^e}(x)=g(x)$.

Comparing the first $s$ coordinates of (\ref{equ11}), $$\xi\lambda u_if^{p^e}(a_i)= u_ig(a_i)=\lambda u_if^{p^e}(a_i)$$ for $i=1,\ldots,s$.
Hence $f(a_i)=0$ with $\xi\neq 1$ and $\lambda u_i\neq 0$ $(i=1,\ldots,s)$. Then $f(x)$ can be expressed as
\begin{equation*}
f(x)=c(x)\prod_{i=1}^{s}(x-a_{i}),
\end{equation*}
for some $c(x)\in \mathbb{F}_{q}[x]$ with $\deg(c(x))\leq k-1-s$. Therefore, $\dim(Hull_e(\mathcal{C}))\leq k-s$.

Conversely, put $f(x)=c(x)\prod\limits_{i=1}^{s}(x-a_{i})$, where $c(x)\in \mathbb{F}_{q}[x]$ and $\deg(c(x))\leq k-1-s$.
Assume that $g(x)=\lambda f^{p^e}(x)$, which yields $\deg(g(x))\leq n-k-1$. Then
\begin{equation*}
\begin{aligned}
&(\xi\lambda u_{1}f^{p^e}(a_{1}),\ldots,\xi\lambda u_{s}f^{p^e}(a_{s}),
\lambda u_{s+1}f^{p^e}(a_{s+1}),\ldots,\lambda u_{n}f^{p^e}(a_{n}))&\\
=&(u_{1}g(a_{1}),\ldots,u_{s}g(a_{s}),u_{s+1}g(a_{s+1}),\ldots,u_{n}g(a_{n})).&
\end{aligned}
\end{equation*}
By Lemma \ref{HE1} (i),
\begin{center}
$(\alpha v'_{1}f(a_{1}),\ldots,\alpha v'_{s}f(a_{s}), v'_{s+1}f(a_{s+1}),\ldots,  v'_{n}f(a_{n}))\in Hull_e(\mathcal{C})$.
\end{center}
Therefore, $\dim(Hull_e(\mathcal{C}))\geq k-s$.

Hence $\dim(Hull_e(\mathcal{C}))= k-s=l$.
\end{proof}

\begin{example}\label{eg1}
In Theorem 1 of [\ref{FLL}], we have $v_i^2=\lambda u_i$, where $\lambda=g^{\frac{\sqrt{q}+1}{2}}$ and $g$ is a primitive element of $\F_q$. By Lemma \ref{cor1},
$\mathbf{GRS}_{m}(\mathbf{a},\mathbf{v})^\perp=\mathbf{GRS}_{n-m}(\mathbf{a},\mathbf{v})$ is satisfied. Therefore, for any $0\leq l\leq k \leq \lfloor\frac{p^e+n-1}{p^e+1}\rfloor$,
there exists a $q$-ary $[n,k]$ MDS code $\mathcal{C}$ with $\dim(Hull_e(\mathcal{C}))=l$ by Theorem \ref{via GRS}.
\end{example}

Afterwards, we apply extended GRS codes to construct MDS codes with $e$-Galois hulls of arbitrary dimensions.

\begin{theorem}\label{via eGRS}
Let $q=p^h$($p$ is an odd prime), $\frac{h}{e}$ is odd and $1\leq m\leq \lfloor\frac{n+1}{2}\rfloor$. Suppose
$$\mathbf{GRS}_{m}(\mathbf{a},\mathbf{v},\infty)^{\bot}=\mathbf{GRS}_{n+1-m}(\mathbf{a},\mathbf{v},\infty).$$
For any $1\leq k \leq \lfloor\frac{p^e+n-1}{p^e+1}\rfloor$ and $0\leq l\leq k-1$, there exists a $q$-ary $[n+1,k]$ MDS code $\mathcal{C}$ with $\dim(Hull_e(\mathcal{C}))=l$.
\end{theorem}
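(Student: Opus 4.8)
The plan is to transplant the proof of Theorem \ref{via GRS} into the extended setting, tracking the extra infinity-coordinate exactly as in the proof of Theorem \ref{thmE}. First I would apply Lemma \ref{cor1} (ii) to the hypothesis $\mathbf{GRS}_m(\mathbf{a},\mathbf{v},\infty)^{\bot}=\mathbf{GRS}_{n+1-m}(\mathbf{a},\mathbf{v},\infty)$ to obtain $v_i^{2}=-u_i\neq 0$ for all $1\leq i\leq n$. Since $p$ and $\frac{h}{e}$ are odd, $\gcd(p^e+1,p^h-1)=2$, so there are integers $\mu,\nu$ with $\mu(p^e+1)+\nu(p^h-1)=2$; setting $v'_i=v_i^{\mu}$ and using $v_i^{p^h-1}=1$ gives ${v'}_i^{p^e+1}=v_i^{2}=-u_i\neq 0$. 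This is precisely the B\'ezout device that converted a Euclidean square condition into a $(p^e+1)$-th power condition in Theorem \ref{via GRS}, and it is the only place the parity hypothesis on $\frac{h}{e}$ enters.

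Next I would set $s:=k-1-l$ and form the twisted evaluation vector $\mathbf{v}'=(\alpha v'_1,\ldots,\alpha v'_s,v'_{s+1},\ldots,v'_n)$ with $\alpha\in\F_q^*$ and $\xi:=\alpha^{p^e+1}\neq 1$, taking $\mathcal{C}:=\mathbf{GRS}_k(\mathbf{a},\mathbf{v}',\infty)$, which is automatically an $[n+1,k]_q$ MDS code. For a generic codeword $\mathbf{c}=(\alpha v'_1 f(a_1),\ldots,\alpha v'_s f(a_s),v'_{s+1}f(a_{s+1}),\ldots,v'_n f(a_n),f_{k-1})$ with $\deg f\leq k-1$, I would invoke Lemma \ref{HE1} (ii) and substitute ${v'}_i^{p^e+1}=-u_i$ to reduce membership in $Hull_e(\mathcal{C})$ to the existence of $g$ with $\deg g\leq n-k$ satisfying $-\xi u_i f^{p^e}(a_i)=u_i g(a_i)$ for $i\leq s$, $-u_i f^{p^e}(a_i)=u_i g(a_i)$ for $s+1\leq i\leq n$, and $f_{k-1}^{p^e}=-g_{n-k}$.

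The heart of the argument is the degree bookkeeping, carried out exactly as in Theorem \ref{thmE}. From the last $n-s\geq n-k+1$ of the first $n$ equations I get $g(a_i)=-f^{p^e}(a_i)$; since $k\leq\lfloor\frac{p^e+n-1}{p^e+1}\rfloor$ forces $\deg f^{p^e}\leq p^e(k-1)\leq n-k-1$, the polynomial $g(x)+f^{p^e}(x)$ has degree $\leq n-k$ but more than $n-k$ roots, so $g(x)=-f^{p^e}(x)$ identically. If $f_{k-1}\neq 0$ then $g_{n-k}\neq 0$ forces $\deg g=n-k=p^e(k-1)$, contradicting the bound on $k$; hence $f_{k-1}=0$ and $\deg f\leq k-2$. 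Feeding $g(x)=-f^{p^e}(x)$ into the first $s$ equations gives $(\xi-1)u_i f^{p^e}(a_i)=0$, and $\xi\neq 1$, $u_i\neq 0$ yield $f(a_i)=0$ for $i\leq s$, so $f(x)=c(x)\prod_{i=1}^s(x-a_i)$ with $\deg c\leq k-2-s$; this gives $\dim(Hull_e(\mathcal{C}))\leq k-1-s$. For the reverse inequality I would check directly that each such $f$, paired with $g=-f^{p^e}$ (whose degree $\leq n-k-1$ makes $g_{n-k}=0$ match $f_{k-1}=0$), produces a hull codeword, yielding $\dim(Hull_e(\mathcal{C}))\geq k-1-s$ and hence the exact value $l=k-1-s$.

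I expect no serious obstacle, since the construction is a direct extension of Theorem \ref{via GRS}; the only real care is in the infinity-coordinate. Concretely, the step most worth double-checking is the $f_{k-1}=0$ contradiction with the fixed constant $\lambda=-1$, together with the index shift $s=k-1-l$ (rather than $s=k-l$), which is what correctly accounts for the single degree of freedom lost when $\deg f$ is pushed down to $k-2$.
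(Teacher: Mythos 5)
Your proposal is correct and follows essentially the same route as the paper's proof: Lemma \ref{cor1}~(ii) to get $v_i^2=-u_i$, the B\'ezout trick exploiting $\gcd(p^e+1,p^h-1)=2$ to pass to ${v'}_i^{p^e+1}=-u_i$, the twist by $\alpha$ on the first $s=k-1-l$ coordinates, and the identical degree bookkeeping via Lemma \ref{HE1}~(ii). Your explicit justification that $f_{k-1}=0$ (via $f_{k-1}^{p^e}=-g_{n-k}$ and the degree bound on $g=-f^{p^e}$) is a detail the paper's write-up of this theorem leaves implicit, borrowing it from Theorem \ref{thmE}, and you have supplied it correctly.
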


\begin{proof}
(i). Since $\mathbf{GRS}_{m}(\mathbf{a},\mathbf{v},\infty)^{\bot}=\mathbf{GRS}_{n+1-m}(\mathbf{a},\mathbf{v},\infty)$ and by Lemma \ref{cor1} (ii), we obtain
\begin{equation}\label{eEuclidean}
v_{i}^{2}=-u_{i}\neq 0(1\leq i \leq n).
\end{equation}
It follows that $\gcd(p^e+1,p^h-1)=2$, since $\frac{h}{e}$ and $p$ are odd. Similar to Theorem \ref{via GRS}, it takes
$\mu(p^e+1)+\nu(p^l-1)=2$($\mu$ and $\nu$ are two integers) and $v_i^{\mu(p^e+1)}=-u_i\neq 0(1\leq i\leq n)$. Set $v'_i=v_i^\mu(1\leq i\leq n)$. Then
$${v'}_i^{p^e+1}= -u_i\neq 0(1\leq i\leq n).$$

Denote by $s:=k-l-1$, $\mathbf{a}=(a_{1},a_{2},\ldots,a_{n})$ and
$\mathbf{v}^{'}=(\alpha v'_{1},\alpha v'_{2},\ldots,\alpha v'_{s},v'_{s+1},\ldots,v'_{n})$, where $\alpha\in\mathbb{F}_{q}^*$ and $\alpha^{p^e+1}\neq 1$.
Consider the $e$-Galois hull of the $[n,k]_q$ MDS code $\mathcal{C}:=\mathbf{GRS}_{k}(\mathbf{a},\mathbf{v}',\infty)$. Then for any
\begin{center}
$\mathbf{c}=(\alpha v'_{1}f(a_{1}),\ldots,\alpha v'_{s}f(a_{s}),v'_{s+1}f(a_{s+1}),\ldots,v'_{n}f(a_{n}),f_{k-1})\in Hull_e(\mathcal{C})$
\end{center}
with $\deg(f(x))\leq k-1$, according to Lemma \ref{HE1} (ii), there exists a polynomial $g(x)\in \mathbb{F}_{q}[x]$ with $\deg(g(x))\leq n-k$ such that
\begin{equation*}
\begin{aligned}
&(\alpha^{p^e+1} {v'}_{1}^{p^e+1}f^{p^e}(a_{1}),\ldots,\alpha^{p^e+1}{v'}_{s}^{p^e+1}f^{p^e}(a_{s}),{v'}_{s+1}^{p^e+1}f^{p^e}(a_{s+1}),
\ldots,{v'}_{n}^{p^e+1}f^{p^e}(a_{n}),f_{k-1}^{p^e})&\\
=&(u_{1}g(a_{1}),\ldots,u_{s}g(a_{s}),u_{s+1}g(a_{s+1}),\ldots,u_{n}g(a_{n}),-g_{n-k}).&
\end{aligned}
\end{equation*}
Set $\xi=\alpha^{p^e+1}$. It yields
\begin{equation}\label{equ1}
\begin{aligned}
&(-\xi u_{1}f^{p^e}(a_{1}),\ldots,-\xi u_{s}f^{p^e}(a_{s}),-u_{s+1}f^{p^e}(a_{s+1}),\ldots,-u_{n}f^{p^e}(a_{n}),f_{k-1}^{p^e})&\\
=&(u_{1}g(a_{1}),\ldots,u_{s}g(a_{s}),u_{s+1}g(a_{s+1}),\ldots,u_{n}g(a_{n}),-g_{n-k}).&
\end{aligned}
\end{equation}
When $s+1 \leq  i \leq  n$, we get $-f^{p^e}(a_i)=g(a_i)$. Therefore, the number of distinct roots of $f^{p^e}(x)+g(x)$ is at least $n-s\geq n-k+1$.
We know $\deg(f^{p^e}(x))\leq p^e(k-1)\leq n-k-1$ from $k\leq \lfloor\frac{p^e+n-1}{p^e+1}\rfloor$, and $\deg(g(x))\leq n-k$.
Thus it derives that $\deg(f^{p^e}(x)+g(x))\leq n-k$. So $-f^{p^e}(x)=g(x)$.

From the first $s$ coordinates of (\ref{equ1}), we have $$-\xi u_if^{p^e}(a_i)=u_ig(a_i)= -u_if^{p^e}(a_i)$$ with $i=1,\ldots,s$, which derives
$f(a_i)=0$ with $\xi\neq 1$ and $u_i\neq 0$ $(i=1,\ldots,s)$. Then $f(x)$ can be expressed as
\begin{equation*}
f(x)=c(x)\prod_{i=1}^{s}(x-a_{i}),
\end{equation*}
for some $c(x)\in \mathbb{F}_{q}[x]$ with $\deg(c(x))\leq k-2-s$. It follows that $\dim(Hull_e(\mathcal{C}))\leq k-1-s$.

Conversely, put $f(x)=c(x)\prod\limits_{i=1}^{s}(x-a_{i})$, where $c(x)\in \mathbb{F}_{q}[x]$ and $\deg(c(x))\leq k-2-s$, which yields $f_{k-1}=0$.
Assume that $g(x)=-f^{p^e}(x)$. Then $\deg(g(x))\leq n-k-1$, which yields $g_{n-k}=0$. It takes
\begin{equation*}
\begin{aligned}
&(-\xi u_{1}f^{p^e}(a_{1}),\ldots,-\xi u_{s}f^{p^e}(a_{s}),-u_{s+1}f^{p^e}(a_{s+1}),\ldots,-u_{n}f^{p^e}(a_{n}),0)&\\
=&(u_{1}g(a_{1}),\ldots,u_{s}g(a_{s}),u_{s+1}g(a_{s+1}),\ldots,u_{n}g(a_{n}),0).&
\end{aligned}
\end{equation*}
According to Lemma \ref{HE1} (ii),
\begin{center}
$(\alpha v'_{1}f(a_{1}),\ldots,\alpha v'_{s}f(a_{s}), v'_{s+1}f(a_{s+1}),\ldots,  v'_{n}f(a_{n}),0)\in Hull_e(\mathcal{C})$.
\end{center}
Therefore, $\dim(Hull_e(\mathcal{C}))\geq k-1-s$.

As a result, $\dim(Hull_e(\mathcal{C}))= k-1-s=l$.
\end{proof}

\begin{example}\label{eg2}
In Theorem 1(ii) of [\ref{FZXF}], we have $v_i^2=-u_i$. By Lemma \ref{cor1},
$\mathbf{GRS}_{m}(\mathbf{a},\mathbf{v},\infty)^{\perp}=\mathbf{GRS}_{n+1-m}(\mathbf{a},\mathbf{v},\infty)$ is satisfied. Therefore,
for any $1\leq k \leq \lfloor\frac{p^e+n-1}{p^e+1}\rfloor$ and $0\leq l\leq k-1$, there exists a $q$-ary $[n+1,k]$ MDS code $\mathcal{C}$ with $\dim(Hull_e(\mathcal{C}))=l$.
\end{example}

\begin{remark}\label{remark}
These two theorems build relationships between $e$-Galois orthogonal property of GRS codes and Euclidean orthogonal property of GRS codes. Therefore, when $\frac{h}{e}$ is odd,
we can construct GRS codes with $e$-Galois hulls of arbitrary dimensions via GRS codes satisfying Euclidean orthogonal property. In fact, if an (extended) GRS code
is a Euclidean self-orthogonal code (including Euclidean self-dual code), the conditions of Theorems \ref{via GRS} and \ref{via eGRS} can be satisfied.
\end{remark}


\section{Applications to EAQECCs and MDS EAQECCs}

\quad\; In this section, we apply the results in Sections 3 and 4 to construct several families of EAQECCs and MDS EAQECCs,
which are more general than previous works. More details on EAQECCs are referred to [\ref{WB}].

An $[[n,k,d;c]]_q$ EAQECC $\mathcal{C}$ means that under the assist of $c$ pairs of maximally entangled Bell states, the quantum code $\mathcal{C}$ can
encode $k$ information qubits into $n$ channel qubits and $d$ represents the minimum distance. Similar to classical linear codes, EAQECC also satisfies
the quantum Singleton bound, which is given in the following lemma.

\begin{lemma}([\ref{BDH}, \ref{LA}, \ref{MFA}])\label{y6}
Assume that $d\leq\frac{n+2}{2}$. Then $[[n,k,d;c]]_{q}$ EAQECC satisfies
\begin{center}
$n+c-k\geq 2(d-1)$,
\end{center}
where $0\leq c\leq n-1$. \qquad\qquad\qquad\qquad\qquad\qquad\qquad\qquad\qquad\qquad\qquad\qquad\qquad\qquad\qquad\qquad\qquad $\square$
\end{lemma}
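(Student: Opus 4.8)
The statement is the entanglement-assisted quantum Singleton bound, a known result (attributed to the cited works); the plan is to reconstruct its standard information-theoretic proof. First I would model the code by a purification: let $R$ be a reference of $k$ qudits maximally entangled with the logical input, let $E$ denote the $c$ qudits held by the receiver as halves of the preshared maximally entangled pairs, and let $Q=Q_1\cdots Q_n$ be the $n$ transmitted qudits produced by the encoding isometry acting on the logical input together with the sender's halves of the ebits. Since the encoding is an isometry applied to a pure input, the global state $\rho_{RQE}$ is pure, and one has $S(R)=k\log q$ and $S(E)\le c\log q$, where $S(\cdot)$ is the von Neumann entropy.

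Next I would exploit the hypothesis $d\le\frac{n+2}{2}$, which is exactly $n\ge 2(d-1)$, to split the transmitted register as $Q=A\cup A'\cup B$ with $|A|=|A'|=d-1$ disjoint and $|B|=n-2(d-1)\ge 0$. The key structural input is the erasure-correction characterization (the Knill--Laflamme conditions in the entanglement-assisted setting): a set of at most $d-1$ transmitted qudits is correctable as an erasure, which is equivalent to its being decoupled from the reference, i.e.\ $I(R;A)=0$ and $I(R;A')=0$, where $I(X;Y)=S(X)+S(Y)-S(XY)$. From $I(R;A)=0$ I get $S(RA)=S(R)+S(A)$, and purity of $\rho_{RAA'BE}$ gives $S(RA)=S(A'BE)$, hence $S(A'BE)=S(R)+S(A)$; symmetrically $S(ABE)=S(R)+S(A')$. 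Applying subadditivity $S(A'BE)\le S(A')+S(BE)$ and $S(ABE)\le S(A)+S(BE)$ to these two identities and adding them, the $S(A)$ and $S(A')$ terms cancel and I obtain $S(R)\le S(BE)$.

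Finally I would bound the right-hand side by subadditivity and dimension counting, $S(BE)\le S(B)+S(E)\le (|B|+c)\log q=(n-2(d-1)+c)\log q$, and combine with $S(R)=k\log q$ to conclude $k\le n+c-2(d-1)$, that is, $n+c-k\ge 2(d-1)$. I expect the main obstacle to be the structural step rather than the arithmetic: one must justify that distance $d$ genuinely yields erasure correction of every $(d-1)$-subset, and that in the entanglement-assisted model, where the decoder also holds $E$, the correct decoupling condition remains $I(R;A)=0$ for the lost register $A$. This is precisely the content of the cited references, so within the paper I would invoke it directly rather than reprove it.
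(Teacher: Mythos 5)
The paper does not prove this lemma at all: it is quoted verbatim from the cited references [\ref{BDH}, \ref{LA}, \ref{MFA}] and marked closed with a $\square$, so there is no in-paper argument to compare against. Your proposal is a correct reconstruction of the entropic proof of the entanglement-assisted Singleton bound, which is essentially the argument of the third cited reference (the ``entropic proofs of Singleton bounds'' paper): the purification $\rho_{RQE}$, the split $Q=A\cup A'\cup B$ with $|A|=|A'|=d-1$ (legitimate precisely because $d\leq\frac{n+2}{2}$ gives $n\geq 2(d-1)$), the decoupling identities $S(A'BE)=S(R)+S(A)$ and $S(ABE)=S(R)+S(A')$, and the cancellation via subadditivity yielding $S(R)\leq S(BE)\leq\big(n-2(d-1)+c\big)\log q$ are all sound, and the arithmetic checks out. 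The only load-bearing step you do not prove is the equivalence between minimum distance $d$ and decoupling $I(R;A)=0$ for every $(d-1)$-subset $A$ in the entanglement-assisted model (where the decoder also holds $E$); you correctly identify this as the structural input and defer it to the cited works, which is exactly how the present paper treats the entire lemma. In short: your route is the standard one for this result, and it is more informative than what the paper provides, since the paper provides nothing.
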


\begin{remark}
When $d\leq\frac{n+2}{2}$, an EAQECC attaining the quantum Singleton bound, that is $n+c-k=2(d-1)$, is called an MDS EAQECC.
\end{remark}

For a matrix $M=(m_{ij})$ over $\F_q$, define $M^{(p^{h-e})}=(m_{ij}^{p^{h-e}})$ and put $M^{T_e}=\left(M^{(p^{h-e})}\right)^T$.

In [\ref{HL}] and [\ref{QCWL}], the authors proposed methods for constructing EAQECCs by utilizing classical linear codes with $e$-Galois inner products
over finite fields in the following.

\begin{lemma}([\ref{HL}, Corollary 3.2] and [\ref{QCWL}, Corollary 5.2])\label{y7}
Let $H$ be a parity check matrix of an $[n,k,d]$ linear code $\mathcal{C}$ over $\F_{q}$. Then there exists an $[[n,2k-n+c,d;c]]_q$ EAQECC, where
$c=rank(HH^{T_e})$ is the required number of maximally entangled states.
\qquad\qquad\qquad\qquad\qquad\qquad\qquad\qquad\qquad\qquad\qquad\qquad\qquad\qquad\qquad\qquad\qquad\qquad $\square$
\end{lemma}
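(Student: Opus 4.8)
The plan is to derive the statement from the general entanglement-assisted stabilizer framework over $\F_q$, in which the $e$-Galois inner product plays the role of the bilinear form that governs the commutation of the check operators. Starting from the parity-check matrix $H\in\F_q^{(n-k)\times n}$ of $\mathcal{C}$, I would attach to each of the $n-k$ rows a generalized Pauli (check) operator, arranged so that the operators associated with rows $h_i,h_j$ commute exactly when the pairing $\sum_l H_{il}H_{jl}^{p^{h-e}}$ vanishes. Collecting these pairings into a single matrix identifies the obstruction to simultaneous diagonalizability with $HH^{T_e}$, whose $(i,j)$ entry is precisely $\sum_l H_{il}H_{jl}^{p^{h-e}}$ by the definition $M^{T_e}=(M^{(p^{h-e})})^{T}$. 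The entanglement-assisted construction then supplies one maximally entangled pair for each independent anticommuting direction, so the required number of ebits is $c=\mathrm{rank}(HH^{T_e})$.

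With $c$ fixed, the remaining two parameters follow from bookkeeping. For the dimension, the standard entanglement-assisted count (total length $n$, with $n-k$ classical checks of which $c$ are repaired by the shared ebits) gives a logical dimension of $n-2(n-k)+c=2k-n+c$, which is the claimed $[[n,2k-n+c,d;c]]_q$ count. For the distance, I would argue that an error is undetectable only if it lies in the code built from $\mathcal{C}$ together with its $e$-Galois dual yet escapes every syndrome, and that the Hamming-weight structure forces such an error to have weight at least $d$, so the distance is inherited from $\mathcal{C}$. To make the $e$-versus-$(h-e)$ twist transparent I would use the correspondence $\mathcal{C}^{\perp_e}=(\mathcal{C}^{\perp})^{[p^e]}$, which follows from raising $\langle\mathbf{a},\mathbf{b}\rangle_e=0$ to the $p^{h-e}$-th power and invoking $c^{p^h}=c$; since the coordinatewise Frobenius $\mathbf{x}\mapsto\mathbf{x}^{[p^e]}$ preserves both Hamming weight and dimension, it transports the known Euclidean and Hermitian EAQECC bounds into the $e$-Galois setting without altering $n$, $k$, or $d$.

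The hard part will be the asymmetry of the $e$-Galois form: unlike the Euclidean case, $\langle\mathbf{a},\mathbf{b}\rangle_e\neq\langle\mathbf{b},\mathbf{a}\rangle_e$ in general, so $HH^{T_e}$ need not be symmetric and the commutation bookkeeping that produces $c$ must be handled with care. Concretely, I would need to verify that the $\F_p$-bilinear (but $\F_q$-sesquilinear) pairing closes into a well-defined symplectic-type structure once the entangled register is adjoined, and that $\mathrm{rank}(HH^{T_e})$ --- rather than the rank of some symmetrization --- is genuinely the minimal ebit count. Establishing this compatibility, together with the Frobenius identity $c^{p^h}=c$ that reconciles the $p^e$ and $p^{h-e}$ exponents, is the step where the $e$-Galois generalization departs from the classical Brun--Devetak--Hsieh argument; once it is in place, the dimension and distance claims reduce to the standard counting described above.
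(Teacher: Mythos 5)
First, a point of comparison: the paper offers no proof of this lemma at all --- it is imported verbatim from [\ref{HL}, Corollary 3.2] and [\ref{QCWL}, Corollary 5.2] and stated with a closing $\square$ --- so there is no in-paper argument to measure yours against. Judged on its own terms, your sketch follows the standard entanglement-assisted stabilizer route, and the peripheral bookkeeping is correct: the dimension count $n-2(n-k)+c=2k-n+c$, the identity $\mathcal{C}^{\perp_e}=(\mathcal{C}^{\perp})^{[p^e]}$ obtained by raising $\langle\mathbf{a},\mathbf{b}\rangle_e=0$ to the $p^{h-e}$-th power, and the observation that coordinatewise Frobenius preserves length, dimension and weight are all accurate and are indeed the ingredients the cited references rely on.

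The gap is that the one assertion carrying real content --- that the required number of ebits is exactly $c=\mathrm{rank}(HH^{T_e})$ --- is precisely the step you defer (``I would need to verify\dots''). In the Euclidean and Hermitian cases the matrix $HH^{T}$ (resp.\ $HH^{\dagger}$) is symmetric (resp.\ Hermitian), and the classical argument diagonalizes the induced form to pair off anticommuting generators; for general $e$ the matrix $HH^{T_e}$ has no such symmetry, and what must actually be shown is that the $\F_p$-valued trace-symplectic form on the check generators built from $H$ has radical of $\F_q$-codimension $\mathrm{rank}(HH^{T_e})$ in the row space --- equivalently, via the identity $\mathrm{rank}(HH^{T_e})=n-k-\dim(Hull_e(\mathcal{C}))$ of [\ref{GJG}] quoted right after the lemma, that the $e$-Galois hull is exactly that radical. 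Your Frobenius-transport idea can in fact close this cleanly: since $HH^{T_e}=H\left(H^{(p^{h-e})}\right)^{T}$ and $H^{(p^{h-e})}$ is a parity check matrix of the monomially/Frobenius-equivalent code $\mathcal{C}^{[p^{h-e}]}$ (also an $[n,k,d]$ code), the lemma is the two-code CSS-type EAQECC construction over $\F_q$ applied to the pair $\left(\mathcal{C},\mathcal{C}^{[p^{h-e}]}\right)$, which yields parameters $[[n,\,k+k-n+c,\,\min(d,d);\,c]]_q$ with $c=\mathrm{rank}\left(H\left(H^{(p^{h-e})}\right)^{T}\right)$, i.e.\ exactly the claim, distance included. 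As written, however, your proposal asserts the conclusion at the point where it should be proved, so it is a plan rather than a proof.
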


In [\ref{GJG}], Guenda et al. showed the relationship between the value of $rank(HH^{T_e})$ and the hull dimension of linear code with parity
check matrix $H$, that is $$rank(HH^{T_e})=n-k-\dim(Hull_e(\mathcal{C})).$$

As a direct consequence of Lemmas \ref{y6} and \ref{y7}, one has the following result, which has been shown in [\ref{QCWL}].

\begin{lemma}([\ref{QCWL}])\label{y8}
Let $\mathcal{C}$ be an $[n,k,d]$ linear code over $\F_q$ and its $e$-Galois dual $\mathcal{C}^{\perp_e}$ has parameters $[n,n-k,d^{\perp_e}]$. Then there exists
$[[n,k-\dim(Hull_e(\mathcal{C})),d;n-k-\dim(Hull_e(\mathcal{C}))]]_q$ EAQECC and
$[[n,n-k-\dim(Hull_{m-e}(\mathcal{C})),d^\perp;k-\dim(Hull_{m-e}(\mathcal{C}))]]_q$ EAQECC.
\end{lemma}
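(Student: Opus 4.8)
The plan is to realize both EAQECCs as instances of Lemma \ref{y7}, converting the quantity $\mathrm{rank}(HH^{T_e})$ into a hull dimension through the identity $\mathrm{rank}(HH^{T_e})=n-k-\dim(Hull_e(\mathcal{C}))$ recorded just before the statement, and invoking Lemma \ref{y6} to confirm that the output parameters respect the quantum Singleton bound. The first EAQECC will come from applying the construction to $\mathcal{C}$ itself, and the second from applying it to the $e$-Galois dual $\mathcal{C}^{\perp_e}$; in the notation $Hull_{m-e}$ we take $m=h$, so that $m-e=h-e$ is the complementary Galois index.

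For the first assertion, let $H$ be a parity check matrix of $\mathcal{C}$, a matrix with $n-k$ rows. Lemma \ref{y7} yields an $[[n,2k-n+c,d;c]]_q$ EAQECC with $c=\mathrm{rank}(HH^{T_e})$. Substituting $c=n-k-\dim(Hull_e(\mathcal{C}))$ gives $2k-n+c=k-\dim(Hull_e(\mathcal{C}))$, which together with the value of $c$ reproduces exactly the first set of claimed parameters; since $Hull_e(\mathcal{C})\subseteq\mathcal{C}^{\perp_e}$ one has $0\le\dim(Hull_e(\mathcal{C}))\le n-k$, so the resulting $c$ lies in the admissible range $0\le c\le n-1$ required by Lemma \ref{y6}. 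For the second assertion, set $\mathcal{D}:=\mathcal{C}^{\perp_e}$, an $[n,n-k,d^{\perp}]$ code, and let $H'$ be a parity check matrix of $\mathcal{D}$. Lemma \ref{y7} then produces an $[[n,2(n-k)-n+c',d^{\perp};c']]_q$ EAQECC with $c'=\mathrm{rank}(H'{H'}^{T_e})=k-\dim(Hull_e(\mathcal{D}))$, so that the dimension is $2(n-k)-n+c'=n-k-\dim(Hull_e(\mathcal{D}))$.

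It remains to identify $\dim(Hull_e(\mathcal{C}^{\perp_e}))$ with $\dim(Hull_{m-e}(\mathcal{C}))$, and I expect this bookkeeping with the paired Galois indices $e$ and $h-e$ — rather than the EAQECC construction, which is pure substitution — to be the only point needing care. The tools are the standard properties of Galois duals: writing $\mathcal{C}^{\perp_e}=(\mathcal{C}^{(p^e)})^{\perp}$ in terms of Euclidean duality and the coordinatewise Frobenius, and using that Frobenius is a dimension-preserving $\mathbb{F}_p$-linear bijection commuting with Euclidean duality, one first obtains $\dim(Hull_e(\mathcal{C}))=\dim(Hull_{h-e}(\mathcal{C}))$ for every linear code. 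Combining this with the involution identity $(\mathcal{C}^{\perp_e})^{\perp_{h-e}}=\mathcal{C}$ gives
\begin{equation*}
Hull_{h-e}(\mathcal{C}^{\perp_e})=\mathcal{C}^{\perp_e}\cap(\mathcal{C}^{\perp_e})^{\perp_{h-e}}=\mathcal{C}^{\perp_e}\cap\mathcal{C}=Hull_e(\mathcal{C}),
\end{equation*}
whence $\dim(Hull_e(\mathcal{C}^{\perp_e}))=\dim(Hull_{h-e}(\mathcal{C}^{\perp_e}))=\dim(Hull_e(\mathcal{C}))=\dim(Hull_{h-e}(\mathcal{C}))$. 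Substituting this common value into $c'$ and into the dimension $n-k-\dim(Hull_e(\mathcal{D}))$ yields precisely the second set of parameters, completing the proof.
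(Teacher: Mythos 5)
Your proposal is correct and follows exactly the route the paper intends: the paper prints no proof, merely citing [\ref{QCWL}] and noting the lemma is a direct consequence of Lemma \ref{y7} together with the identity $rank(HH^{T_e})=n-k-\dim(Hull_e(\mathcal{C}))$, applied to $\mathcal{C}$ and to $\mathcal{C}^{\perp_e}$ respectively, which is precisely what you do. Your extra bookkeeping identifying $\dim(Hull_e(\mathcal{C}^{\perp_e}))$ with $\dim(Hull_{h-e}(\mathcal{C}))$ is sound (though the step $\dim(Hull_e(\mathcal{C}))=\dim(Hull_{h-e}(\mathcal{C}))$ needs, beyond Frobenius commuting with duality, the standard count $\dim(A\cap B^{\perp})=\dim(A^{\perp}\cap B)$ for equidimensional $A,B$); the paper leaves this point implicit.
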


Let $\mathcal{C}$ be an $[n,k,n-k+1]$-MDS code over $\F_q$. Its $e$-Galois dual code $\mathcal{C}^{\bot_e}$ is also an MDS code with $[n,n-k,k+1]$.
Denote by $l=\dim(Hull_e(\mathcal{C}))$ and $l'=\dim(Hull_{m-e}(\mathcal{C}))$. Then we can obtain the following result by Lemma \ref{y8}.

\begin{corollary}([\ref{QCWL}])\label{cor8}
Assume that $\mathcal{C}$ is an $[n,k]$-MDS code over $\F_q$. If $k\leq \lfloor\frac{n}{2}\rfloor$, then there exists an $[[n,k-l,n-k+1;n-k-l]]_q$ EAQECC
and an $[[n,n-k-l,k+1;k-l]]_q$ MDS EAQECC. \qquad $\square$
\end{corollary}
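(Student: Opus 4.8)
The plan is to derive this corollary directly from Lemma \ref{y8} by substituting the MDS parameters and simplifying. First I would record that since $\mathcal{C}$ is an $[n,k]$-MDS code, its minimum distance is $d = n-k+1$, and by the discussion preceding the corollary, its $e$-Galois dual $\mathcal{C}^{\perp_e}$ is an $[n,n-k,k+1]$-MDS code, so $d^{\perp_e} = k+1$. Writing $l = \dim(Hull_e(\mathcal{C}))$, I would plug these values into the first EAQECC of Lemma \ref{y8}: it gives parameters $[[n,\,k-l,\,n-k+1;\,n-k-l]]_q$, which is exactly the first code claimed. This part is purely a matter of reading off the substitution.

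For the second code, the main step is to verify that the quantum Singleton bound is met with equality, justifying the ``MDS'' label. Lemma \ref{y8} produces an EAQECC with parameters $[[n,\,n-k-l',\,k+1;\,k-l']]_q$, where $l' = \dim(Hull_{m-e}(\mathcal{C}))$. I would need to reconcile this $l'$ with the $l$ appearing in the corollary statement. The natural claim is that $\dim(Hull_e(\mathcal{C})) = \dim(Hull_{m-e}(\mathcal{C}))$ under the MDS (indeed GRS) setting, so that $l' = l$; this identification is what lets both codes be expressed in terms of the single quantity $l$. Granting $l'=l$, the second code becomes $[[n,\,n-k-l,\,k+1;\,k-l]]_q$, matching the statement.

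The hard part will be confirming the MDS property of the second code, i.e.\ checking the equality $n + c - k' = 2(d'-1)$ of Lemma \ref{y6}. Here the code length is $n$, its dimension is $k' = n-k-l$, its entanglement parameter is $c = k-l$, and its distance is $d' = k+1$. I would compute the left-hand side $n + (k-l) - (n-k-l) = 2k$ and the right-hand side $2((k+1)-1) = 2k$, observing they coincide, so the quantum Singleton bound is attained with equality and the code is genuinely MDS. I would also note that the hypothesis $k \leq \lfloor n/2 \rfloor$ guarantees $d' = k+1 \leq \frac{n+2}{2}$, so the regime assumption $d \leq \frac{n+2}{2}$ of Lemma \ref{y6} is satisfied and the Singleton bound applies; likewise one checks $0 \leq c \leq n-1$. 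The only genuine subtlety is the relation $l' = l$ between the $e$-Galois and $(m-e)$-Galois hull dimensions, which is where I would expect to invoke the symmetry of the GRS construction; everything else is arithmetic verification.
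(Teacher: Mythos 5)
Your proposal is correct and follows essentially the same route as the paper, which presents the corollary as an immediate consequence of Lemma \ref{y8} together with the fact that the $e$-Galois dual of an $[n,k]$ MDS code is an $[n,n-k,k+1]$ MDS code; your substitution and the Singleton-bound check $n+(k-l)-(n-k-l)=2k=2\big((k+1)-1\big)$ (valid since $k\leq\lfloor n/2\rfloor$ gives $k+1\leq\frac{n+2}{2}$) is exactly what is implicit there. The only point worth adjusting is the identification $l'=\dim(Hull_{h-e}(\mathcal{C}))=\dim(Hull_e(\mathcal{C}))=l$: this is not special to the GRS or MDS setting but holds for every linear code, since $\dim(Hull_e(\mathcal{C}))=k-\mathrm{rank}(GG^{T_e})$ for a generator matrix $G$ and $\mathrm{rank}(GG^{T_e})=\mathrm{rank}\big((GG^{T_e})^T\big)=\mathrm{rank}(GG^{T_{h-e}})$, a fact the paper glosses over by simply writing $l$ for both codes.
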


From Corollary \ref{cor8} and all the theorems in Sections 3 and 4, we have the following results directly.

\begin{theorem}\label{EAQ2}
Let $q=p^{em}$ with $p$ odd prime. Assume that $t\mid(p^{e}-1)$, $m$ is even, $r\leq m-1$ and $n=tp^{er}$.

(i). For any $1\leq k\leq \lfloor\frac{p^e+n-1}{p^e+1}\rfloor$ and $0\leq l\leq k-1$, then there exists an $\big[[n+1,k-l,n-k+2;n+1-k-l]\big]_{q}$ EAQECC over $\F_q$.

(ii). For any $1\leq k\leq \lfloor\frac{p^{h-e}+n-1}{p^{h-e}+1}\rfloor$ and $0\leq l'\leq k-1$, then there exists an $\big[[n+1,n+1-k-l',k+1;k-l']\big]_{q}$ MDS EAQECC over $\F_q$.
\end{theorem}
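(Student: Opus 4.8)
The plan is to derive this theorem as a direct application of Theorem \ref{thmE} combined with Corollary \ref{cor8}. The key observation is that Theorem \ref{thmE} already produces, under exactly the hypotheses restated here ($q=p^{em}$, $t\mid(p^e-1)$, $m$ even, $r\leq m-1$, $n=tp^{er}$), an $[n+1,k]_q$ MDS code $\mathcal{C}$ with $l$-dimensional $e$-Galois hull for every admissible $k$ and $l$. So the real content is just to feed these MDS codes into the EAQECC machinery of Corollary \ref{cor8} and track the parameters carefully, being mindful that the block length is $N=n+1$ rather than $n$.

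For part (i), I would start from the code $\mathcal{C}$ of Theorem \ref{thmE}, which is an $[N,k]_q = [n+1,k]_q$ MDS code with $\dim(Hull_e(\mathcal{C}))=l$. Its minimum distance is $d=N-k+1=n-k+2$. Applying the first conclusion of Corollary \ref{cor8} with block length $N=n+1$, I obtain an $\big[[N,\,k-l,\,N-k+1;\,N-k-l]\big]_q$ EAQECC, which upon substituting $N=n+1$ reads $\big[[n+1,\,k-l,\,n-k+2;\,n+1-k-l]\big]_q$, exactly as claimed. The only point requiring a brief check is the hypothesis $k\leq\lfloor N/2\rfloor$ needed by Corollary \ref{cor8}: since $k\leq\lfloor\frac{p^e+n-1}{p^e+1}\rfloor$ and $p^e\geq 3$, one verifies $k\leq\lfloor\frac{n+1}{2}\rfloor=\lfloor N/2\rfloor$, so the bound applies.

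For part (ii), the strategy is to apply Theorem \ref{thmE} with the Galois parameter $e$ replaced by its complement $m-e$ (i.e.\ $h-e$ in the exponent, since $h=em$), producing an $[n+1,k]_q$ MDS code whose $(m-e)$-Galois hull has dimension $l'$; here the admissible range becomes $1\leq k\leq\lfloor\frac{p^{h-e}+n-1}{p^{h-e}+1}\rfloor$ because the roles of $e$ and $h-e$ are interchanged. Then I invoke the second conclusion of Corollary \ref{cor8}, which yields an MDS EAQECC with parameters $\big[[N,\,N-k-l',\,k+1;\,k-l']\big]_q$; substituting $N=n+1$ gives $\big[[n+1,\,n+1-k-l',\,k+1;\,k-l']\big]_q$, as stated. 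I should note that Theorem \ref{thmE} is symmetric in the sense that its proof goes through verbatim with $e$ replaced by $h-e$, since $h-e$ also satisfies $2(h-e)\mid h$ whenever $2e\mid h$ and the only structural facts used are Lemmas \ref{HE1}, \ref{cor1}, and \ref{yC}, all of which hold for any valid Galois index.

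The main obstacle, though a minor one, is bookkeeping rather than mathematics: one must be careful that the $l$ in part (i) measures the $e$-Galois hull while the $l'$ in part (ii) measures the $(m-e)$-Galois hull, and that the two distance parameters $n-k+2$ and $k+1$ correspond respectively to $\mathcal{C}$ and its $e$-Galois dual (which is again MDS with dimension $N-k$). Beyond confirming these identifications and the single inequality $k\leq\lfloor N/2\rfloor$, the proof is immediate from the cited results, so I would simply write ``This follows directly from Theorem \ref{thmE} and Corollary \ref{cor8}'' with a one-line parameter substitution for each part.
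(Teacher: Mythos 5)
Your overall route is the same as the paper's: Theorem \ref{EAQ2} is stated there as a direct consequence of Theorem \ref{thmE} combined with Corollary \ref{cor8}, and your treatment of part (i) --- including the check that $k\leq\lfloor\frac{p^e+n-1}{p^e+1}\rfloor$ forces $k\leq\lfloor\frac{n+1}{2}\rfloor$ so that Corollary \ref{cor8} applies with block length $N=n+1$ --- is correct and complete.

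The justification you give for part (ii), however, contains a genuine error. You claim Theorem \ref{thmE} ``goes through verbatim with $e$ replaced by $h-e$'' because ``$h-e$ also satisfies $2(h-e)\mid h$ whenever $2e\mid h$.'' This is false: with $h=em$, $e=1$, $m=4$ one has $2e=2\mid 4=h$ but $2(h-e)=6\nmid 4$. More seriously, the hypotheses of Theorem \ref{thmE} are not invariant under $e\mapsto h-e$: they require $q=p^{em}$ with $m$ even, $t\mid(p^e-1)$, and $n=tp^{er}$ with $r\leq m-1$; rewriting these with $e'=h-e=e(m-1)$ would force $m'=m/(m-1)$ and $r'=r/(m-1)$, which are not integers for $m>2$, so you cannot literally invoke the theorem at the complementary Galois index. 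The conclusion you want is nevertheless true, and the correct reason is that the \emph{construction} transfers rather than the theorem statement: since $p^{h-e}\equiv p^{-e}\pmod{q-1}$ and $p$ is coprime to $q-1$, one has $\gcd(p^{h-e}+1,q-1)=\gcd(p^e+1,q-1)$, hence $\{x^{p^{h-e}+1}:x\in\F_q^*\}=\{x^{p^e+1}:x\in\F_q^*\}=E$. The proof of Theorem \ref{thmE} only needs $\lambda u_i\in\F_{p^e}^*\subseteq E$, so the same evaluation points and the same $\lambda u_i$ admit $(p^{h-e}+1)$-th roots $v_i$, and the argument runs unchanged with the $(h-e)$-Galois inner product (Lemma \ref{HE1} holds for any Galois index); the only modification is the degree count $p^{h-e}(k-1)\leq n-k-1$, which is precisely where the tighter range $k\leq\lfloor\frac{p^{h-e}+n-1}{p^{h-e}+1}\rfloor$ comes from. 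With that repair, part (ii) follows from the second family of Lemma \ref{y8} exactly as you describe.
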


\begin{example}
Choose $(p,m,r,e,t)=(5,4,3,3,31)$. It is easy to see that $t=31\mid 5^3-1=p^e-1$. By Theorem \ref{EAQ2}, there exist EAQECCs with parameters
$\big[[60546876,k-l,60546877-k;60546876-k-l]\big]_{q}$, where $1\leq k\leq 480531$ and $0\leq l\leq k-1$ and MDS EAQECCs with parameters
$\big[[60546876,60546876-k-l',k+1;k-l']\big]_{q}$, where $1\leq k\leq 31$ and $0\leq l'\leq k-1$. The two classes of EAQECCs are new in the sense that their parameters can not be covered by previous results.
\end{example}

\begin{theorem}\label{EAQ3}
Let $q=p^h$, $2e\mid h$ and $n=tp^{h-e}$ with $1\leq t\leq p^{e}$.

(i). For any $1\leq k\leq \lfloor\frac{p^e+n-1}{p^e+1}\rfloor$ and $0\leq l\leq k-1$,
there exists an $\big[[n+1,k-l,n-k+2;n+1-k-l]\big]_{q}$ EAQECC over $\F_q$.

(ii). For any $1\leq k\leq \lfloor\frac{p^{h-e}+n-1}{p^{h-e}+1}\rfloor$ and $0\leq l'\leq k-1$,
there exists an $\big[[n+1,n+1-k-l',k+1;k-l']\big]_{q}$ MDS EAQECC over $\F_q$.
\end{theorem}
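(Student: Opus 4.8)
The plan is to obtain both parts as immediate applications of Corollary \ref{cor8} to suitable $[n+1,k]_q$ MDS codes whose Galois hull dimension we can prescribe in advance. First I would record that here the length is $N=n+1$ and the dimension is $K=k$, and check the hypothesis $K\leq\lfloor N/2\rfloor$ of Corollary \ref{cor8}: in both parts $k$ is bounded by $\lfloor\frac{p^{e'}+n-1}{p^{e'}+1}\rfloor$ with $e'\in\{e,h-e\}$, and since $(p^{e'}-1)(n-1)\geq -2$ one has $\frac{p^{e'}+n-1}{p^{e'}+1}\leq\frac{n+1}{2}$, so indeed $k\leq\lfloor\frac{n+1}{2}\rfloor$. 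Thus Corollary \ref{cor8} applies and only the construction of the underlying MDS code remains in each case.

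For part (i), the second construction of Section 3 (the theorem with $2e\mid h$ and $n=tp^{h-e}$) already supplies, for every $1\leq k\leq\lfloor\frac{p^e+n-1}{p^e+1}\rfloor$ and $0\leq l\leq k-1$, an $[n+1,k]_q$ MDS code $\mathcal{C}$ with $\dim(Hull_e(\mathcal{C}))=l$. Feeding this $\mathcal{C}$ into the first (EAQECC) output of Corollary \ref{cor8} with $N=n+1$ and $K=k$ yields an
\[\big[[\,n+1,\;k-l,\;(n+1)-k+1;\;(n+1)-k-l\,]\big]_q=\big[[\,n+1,\;k-l,\;n-k+2;\;n+1-k-l\,]\big]_q\]
EAQECC, which is exactly (i).

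For part (ii) I need an $[n+1,k]_q$ MDS code with prescribed $(h-e)$-Galois hull dimension $l'$, valid over the smaller range $k\leq\lfloor\frac{p^{h-e}+n-1}{p^{h-e}+1}\rfloor$. The idea is to rerun the construction with $e$ replaced by $h-e$, keeping the identical evaluation points $a_1,\dots,a_n$ and the identical scalars $v_1,\dots,v_n$ fixed by $v_i^{p^e+1}=u_i\in\F_{p^e}^*$. The crux, and the one place where genuine work is needed, is that these same $v_i$ also satisfy $v_i^{p^{h-e}+1}=u_i$. To see this I would first argue $v_i\in\F_{p^{2e}}^*$: from $v_i^{p^e+1}=u_i$ and $u_i^{p^e}=u_i$ one gets $v_i^{(p^e+1)(p^e-1)}=v_i^{p^{2e}-1}=1$. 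Since $2e\mid h$, for $x\in\F_{p^{2e}}$ the Frobenius exponent only matters modulo $2e$, and $h-e\equiv e\pmod{2e}$ because $h-e=2ec-e$; hence $v_i^{p^{h-e}}=v_i^{p^e}$, so $v_i^{p^{h-e}+1}=v_i^{p^e}v_i=v_i^{p^e+1}=u_i$.

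With this identity the hull computation goes through verbatim with $e$ replaced by $h-e$: choosing the twisted vector with $s'=k-1-l'$ scaled coordinates and $\alpha^{p^{h-e}+1}\neq 1$ (such $\alpha$ exists since $\gcd(p^{h-e}+1,q-1)=p^e+1<q-1$), Lemma \ref{HE1}(ii) with exponent $h-e$ forces $\lambda f^{p^{h-e}}(x)=g(x)$ and then $f(a_i)=0$ on the twisted coordinates, giving $\dim(Hull_{h-e}(\mathcal{C}))=l'$; the degree bound $p^{h-e}(k-1)\leq n-k-1$ is precisely the range $k\leq\lfloor\frac{p^{h-e}+n-1}{p^{h-e}+1}\rfloor$. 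Applying the MDS EAQECC output of Corollary \ref{cor8} with $N=n+1$, $K=k$ and hull dimension $l'$ then produces
\[\big[[\,n+1,\;(n+1)-k-l',\;k+1;\;k-l'\,]\big]_q,\]
which is (ii). I expect the identity $v_i^{p^{h-e}+1}=u_i$ — equivalently the containment $\F_{p^e}^*\subseteq\{x^{p^{h-e}+1}:x\in\F_q^*\}$, which re-uses $2e\mid h$ in the complementary direction — to be the only substantive step, with everything else being bookkeeping already carried out in Section 3 and Corollary \ref{cor8}.
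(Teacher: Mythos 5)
Your proposal is correct and follows the same overall route as the paper: the paper derives Theorem \ref{EAQ3} in one line, as a direct application of Corollary \ref{cor8} to the $[n+1,k]_q$ MDS codes with prescribed Galois hull dimension constructed in Section 3, and your part (i) is exactly that. Where you go beyond the paper is part (ii): the paper's ``directly'' silently requires an $[n+1,k]_q$ MDS code with prescribed $(h-e)$-Galois hull dimension $l'$ for $k\leq\lfloor\frac{p^{h-e}+n-1}{p^{h-e}+1}\rfloor$, and this does \emph{not} follow by merely substituting $h-e$ for $e$ in the hypotheses of the Section 3 theorem (that would require $2(h-e)\mid h$, which generically fails). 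You correctly isolate the one substantive point needed to rerun the construction, namely that the same scalars satisfy $v_i^{p^{h-e}+1}=v_i^{p^e+1}=u_i$, and your verification (via $v_i\in\F_{p^{2e}}^*$ and $h-e\equiv e\pmod{2e}$) is sound; equivalently one can observe that $\gcd(p^{h-e}+1,q-1)=\gcd(p^e+1,q-1)=p^e+1$ when $2e\mid h$, so $\{x^{p^{h-e}+1}:x\in\F_q^*\}$ and $\{x^{p^e+1}:x\in\F_q^*\}$ are the same index-$(p^e+1)$ subgroup of the cyclic group $\F_q^*$, hence both contain $\F_{p^e}^*$. Your existence check for $\alpha$ with $\alpha^{p^{h-e}+1}\neq1$ and the verification of $k\leq\lfloor\frac{n+1}{2}\rfloor$ are likewise correct. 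In short, the proposal is a complete and slightly more careful version of the paper's argument.
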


\begin{example}
Choose $(p,h,e,t)=(3,16,4,73)$. Then $p^e=3^4=81>73=t$. By Theorem \ref{EAQ3}, there exist EAQECCs with parameters $\big[[38795194,k-l,38795195-k;38795194-k-l]\big]_{q}$,
where $1\leq k\leq 473113$ and $0\leq l\leq k-1$ and MDS EAQECCs with parameters $\big[[38795194,38795194-k-l',k+1;k-l']\big]_{q}$, where $1\leq k\leq 73$
and $0\leq l'\leq k-1$. These two classes of EAQECCs also have new parameters which have not been reported previously.
\end{example}

\begin{theorem}\label{EAQ4}
Let $q=p^h$. Assume $1\leq m\leq \lfloor\frac{n}{2}\rfloor$ and $\frac{h}{e}$ is odd. Suppose
$$\mathbf{GRS}_{m}(\mathbf{a},\mathbf{v})^\perp=\mathbf{GRS}_{n-m}(\mathbf{a},\mathbf{v}).$$

(i). For any $1\leq k \leq \lfloor\frac{p^e+n-1}{p^e+1}\rfloor$ and $0\leq l\leq k$, there exists an $\big[[n,k-l,n-k+1;n-k-l]\big]_{q}$ EAQECC over $\F_q$.

(ii). For any $1\leq k \leq \lfloor\frac{p^{h-e}+n-1}{p^{h-e}+1}\rfloor$ and $0\leq l'\leq k$, there exists an $\big[[n,n-k-l',k+1;k-l']\big]_{q}$ MDS EAQECC over $\F_q$.
\end{theorem}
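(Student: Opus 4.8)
The plan is to read off both statements directly from Theorem \ref{via GRS} (together with its $(h-e)$-Galois analogue) and Corollary \ref{cor8}. Recall that Corollary \ref{cor8} takes an $[n,k]$-MDS code $\mathcal{C}$ with $k\le\lfloor\frac{n}{2}\rfloor$ and turns an $e$-Galois hull of dimension $l$ into an $[[n,k-l,n-k+1;n-k-l]]_q$ EAQECC, while a $(h-e)$-Galois hull of dimension $l'$ yields the $[[n,n-k-l',k+1;k-l']]_q$ MDS EAQECC. Thus the whole proof amounts to supplying, for every $l$ (resp. $l'$) in the prescribed range, an $[n,k]$-MDS code whose $e$-Galois (resp. $(h-e)$-Galois) hull has exactly that dimension, and to checking the side condition $k\le\lfloor\frac{n}{2}\rfloor$ under which Corollary \ref{cor8} is valid.

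First I would dispose of the side condition. Since $p$ is odd we have $p^e\ge 3$, and a one-line manipulation shows that $\frac{p^e+n-1}{p^e+1}\le\frac{n}{2}$ is equivalent to $n\ge 2$; hence $k\le\lfloor\frac{p^e+n-1}{p^e+1}\rfloor\le\lfloor\frac{n}{2}\rfloor$ for every admissible $k$, and the identical estimate holds with $p^e$ replaced by $p^{h-e}$. Part (i) is then immediate: under the Euclidean self-orthogonality hypothesis, Theorem \ref{via GRS} produces for each $0\le l\le k\le\lfloor\frac{p^e+n-1}{p^e+1}\rfloor$ an $[n,k]$-MDS code $\mathcal{C}$ with $\dim(Hull_e(\mathcal{C}))=l$, and feeding this into the first half of Corollary \ref{cor8} gives exactly the $[[n,k-l,n-k+1;n-k-l]]_q$ EAQECC.

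For part (ii) I need the $(h-e)$-Galois hull, so I would rerun the proof of Theorem \ref{via GRS} verbatim with $e$ replaced by $h-e$. Every step there is insensitive to which Galois form is used except for the single place where $\gcd(p^e+1,p^h-1)=2$ is invoked in order to solve $\mu(p^e+1)+\nu(p^h-1)=2$ and pass from the Euclidean relation $v_i^2=\lambda u_i$ to a $(p^{h-e}+1)$-st power; the Euclidean hypothesis itself is $e$-independent and carries over unchanged. The main point — and the step I expect to carry the real content — is therefore to check that the same hypothesis that $\frac{h}{e}$ is odd already forces $\gcd(p^{h-e}+1,p^h-1)=2$. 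Here $\frac{h}{e}$ odd means $e\mid h$, say $h=es$ with $s$ odd; reducing modulo $p^{h-e}+1$ gives $p^{h}\equiv -p^{e}\pmod{p^{h-e}+1}$, so $\gcd(p^{h-e}+1,p^{h}-1)=\gcd(p^{h-e}+1,p^{e}+1)$, and since $h-e=e(s-1)$ with $s-1$ even one has $p^{h-e}\equiv 1\pmod{p^{e}+1}$, whence this gcd equals $\gcd(2,p^{e}+1)=2$ because $p$ is odd.

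With that in hand, the $(h-e)$-Galois analogue of Theorem \ref{via GRS} supplies, for each $0\le l'\le k\le\lfloor\frac{p^{h-e}+n-1}{p^{h-e}+1}\rfloor$, an $[n,k]$-MDS code with $\dim(Hull_{h-e}(\mathcal{C}))=l'$; the second half of Corollary \ref{cor8} then delivers the $[[n,n-k-l',k+1;k-l']]_q$ MDS EAQECC. The only genuinely new ingredient beyond quoting the earlier results is the gcd identity above, so that is where I would concentrate the verification, the rest being a bookkeeping match of parameters.
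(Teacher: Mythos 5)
Your proposal is correct and follows the same route the paper takes: Theorem~\ref{EAQ4} is stated as an immediate consequence of Theorem~\ref{via GRS} and Corollary~\ref{cor8}, with part (ii) implicitly using the $(h-e)$-Galois version of Theorem~\ref{via GRS}. Your explicit verifications — that $k\le\lfloor\frac{p^e+n-1}{p^e+1}\rfloor$ forces $k\le\lfloor\frac{n}{2}\rfloor$, and that $\frac{h}{e}$ odd already gives $\gcd(p^{h-e}+1,p^h-1)=2$ so the construction reruns with $e$ replaced by $h-e$ — are exactly the details the paper leaves unstated, and both checks are right.
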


\begin{example}
Similarly as Example \ref{eg1}, from Theorem 1 of [\ref{FLL}], we know there exists a $q$-ary $[n,\frac{n}{2}]$ MDS Euclidean self-dual code, where $n=s(\sqrt{q}-1)+t(\sqrt{q}+1)$, 
$\sqrt{q}$ and $s$ satisfy ``$\sqrt{q}\equiv 1 \pmod 4$ and $s$ is even" or ``$\sqrt{q}\equiv 3 \pmod 4$ and $s$ is odd". By Remark \ref{remark} and Theorem \ref{EAQ4}, 
there exists an $\big[[n,n-k-l',k+1;k-l']\big]_{q}$ MDS EAQECC or $\big[[n,k-l,n-k+1;n-k-l]\big]_{q}$ EAQECC over $\F_q$.
The length of the MDS EAQECC or EAQECC is $n=s(\sqrt{q}-1)+t(\sqrt{q}+1)$, which is more flexible than the previous results.
\end{example}

\begin{theorem}\label{EAQ5}
Let $q=p^h$, $\frac{h}{e}$ is odd and $1\leq m\leq \lfloor\frac{n+1}{2}\rfloor$. Suppose
$$\mathbf{GRS}_{m}(\mathbf{a},\mathbf{v},\infty)^{\perp}=\mathbf{GRS}_{n+1-m}(\mathbf{a},\mathbf{v},\infty).$$

(i). For any $1\leq k \leq \lfloor\frac{p^e+n-1}{p^e+1}\rfloor$ and $0\leq l\leq k-1$, there exists an $\big[[n+1,k-l,n-k+2;n+1-k-l]\big]_{q}$ EAQECC over $\F_q$.

(ii). For any $1\leq k \leq \lfloor\frac{p^{h-e}+n-1}{p^{h-e}+1}\rfloor$ and $0\leq l'\leq k-1$, there exists an $\big[[n+1,n+1-k-l',k+1;k-l']\big]_{q}$ MDS EAQECC over $\F_q$.
\end{theorem}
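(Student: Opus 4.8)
The plan is to obtain both assertions as immediate corollaries of the construction in Theorem \ref{via eGRS} and the EAQECC transfer recorded in Corollary \ref{cor8}. In each case the real content is to manufacture an $[n+1,k]_q$ MDS code whose Galois hull has precisely the prescribed dimension; once such a code is in hand, the EAQECC parameters are read off mechanically from Corollary \ref{cor8} with length $n+1$ and dimension $k$.

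For (i) I would invoke Theorem \ref{via eGRS} directly. Since $\frac{h}{e}$ is odd and the extended GRS code satisfies $\mathbf{GRS}_m(\mathbf{a},\mathbf{v},\infty)^\perp=\mathbf{GRS}_{n+1-m}(\mathbf{a},\mathbf{v},\infty)$, for every $1\le k\le\lfloor\frac{p^e+n-1}{p^e+1}\rfloor$ and $0\le l\le k-1$ there is an $[n+1,k]_q$ MDS code $\mathcal{C}$ with $\dim(Hull_e(\mathcal{C}))=l$. To apply the EAQECC conclusion of Corollary \ref{cor8} I must check its hypothesis $k\le\lfloor\frac{n+1}{2}\rfloor$; this is automatic, for $p^e\ge 3$ gives $\frac{p^e+n-1}{p^e+1}=1+\frac{n-2}{p^e+1}\le 1+\frac{n-2}{4}\le\frac{n+1}{2}$, and the floor is monotone. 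That conclusion then delivers an $[[n+1,k-l,(n+1)-k+1;(n+1)-k-l]]_q=[[n+1,k-l,n-k+2;n+1-k-l]]_q$ EAQECC, which is exactly (i).

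For (ii) I instead need an $[n+1,k]_q$ MDS code with prescribed $(h-e)$-Galois hull dimension $l'$, after which the MDS EAQECC conclusion of Corollary \ref{cor8} (the one governed by $l'=\dim(Hull_{h-e}(\mathcal{C}))$) finishes the job. I would rerun the proof of Theorem \ref{via eGRS} verbatim with the exponent $e$ replaced by $h-e$. The hard part is that the proof rests on $\gcd(p^e+1,p^h-1)=2$, and since $\frac{h}{h-e}$ need not even be an integer, the theorem cannot be quoted blindly; I must instead verify directly that $\gcd(p^{h-e}+1,p^h-1)=2$ under the standing hypotheses. Writing $h=em$ with $m$ odd, reduction modulo $p^{h-e}+1$ gives $p^{h-e}\equiv -1$, hence $p^h\equiv -p^e$ and $p^h-1\equiv -(p^e+1)$, so $\gcd(p^{h-e}+1,p^h-1)=\gcd(p^{h-e}+1,p^e+1)$; and reduction modulo $p^e+1$ gives $p^{h-e}=(p^e)^{m-1}\equiv(-1)^{m-1}=1$ because $m-1$ is even, whence $\gcd(p^{h-e}+1,p^e+1)=\gcd(2,p^e+1)=2$ since $p^e+1$ is even. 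The Euclidean self-duality hypothesis and Lemma \ref{cor1} (ii) do not mention $e$, so with this gcd identity secured the construction goes through unchanged, and the degree bound $\deg(f^{p^{h-e}})\le p^{h-e}(k-1)\le n-k-1$ is exactly what pins down the admissible range $1\le k\le\lfloor\frac{p^{h-e}+n-1}{p^{h-e}+1}\rfloor$.

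Finally I would apply the MDS EAQECC half of Corollary \ref{cor8} to this code. The hypothesis $k\le\lfloor\frac{n+1}{2}\rfloor$ again holds because $p^{h-e}\ge 3$ (as $1\le h-e$), and with length $n+1$, dimension $k$, and $l'=\dim(Hull_{h-e}(\mathcal{C}))$ the conclusion is an $[[n+1,(n+1)-k-l',k+1;k-l']]_q=[[n+1,n+1-k-l',k+1;k-l']]_q$ MDS EAQECC, proving (ii). The only step requiring genuine care is the gcd computation in the previous paragraph; everything else is a direct transcription of Theorem \ref{via eGRS} and Corollary \ref{cor8}.
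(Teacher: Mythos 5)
Your proof is correct and follows the same route as the paper, which simply declares Theorem \ref{EAQ5} to be a direct consequence of Theorem \ref{via eGRS} combined with Corollary \ref{cor8}. In fact you are more careful than the paper on part (ii): the paper never justifies why the construction of Theorem \ref{via eGRS} survives the replacement of $e$ by $h-e$ (note $\frac{h}{h-e}$ need not be an odd integer, so the theorem cannot be quoted as stated), whereas your computation $\gcd(p^{h-e}+1,p^h-1)=\gcd(p^{h-e}+1,p^e+1)=\gcd(2,p^e+1)=2$ supplies exactly the fact the paper leaves implicit.
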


\begin{example}
Similarly as Example \ref{eg2}, from Theorem 1 of [\ref{FZXF}], there exists a $q$-ary MDS Euclidean self-dual code of length $n+1$. 
The parameters $n,n',n_1,n_2,t$ satisfy the following conditions:

(i). $n=tn'$ is odd;

(ii). $n'=n_1n_2$ and $n'\mid (q-1)$;

(iii). $n_1=\gcd(n',\sqrt{q}+1)$;

(iv). $n_2=\frac{n'}{\gcd(n',\sqrt{q}+1)}$;

(v). $1\leq t\leq \frac{\sqrt{q}-1}{n_2}$.

So by Remark \ref{remark} and Theorem \ref{EAQ5}, there exists an $\big[[n+1,n+1-k-l',k+1;k-l']\big]_{q}$ MDS EAQECC or an $\big[[n+1,k-l,n-k+2;n+1-k-l]\big]_{q}$ EAQECC
over $\F_q$. The length of the MDS EAQECC or EAQECC is $n+1=tn'+1$, which is more flexible than the previous results.
\end{example}

\begin{remark}
Theorems \ref{EAQ4} and \ref{EAQ5} propose a mechanism for the constructions of EAQECCs and MDS EAQECCs via MDS codes with Euclidean orthogonal property.
In fact, all Euclidean self-orthogonal (extended) GRS codes can be used to construct EAQECCs and MDS EAQECCs.
\end{remark}

\section{Conclusion}

\quad\; Inspired by [\ref{Cao}, \ref{QCWL}], we propose some new MDS codes with Galois hulls of arbitrary dimensions and several new families of EAQECCs
and MDS EAQECCs. In our constructions, the lengths of codes($n$ or $n+1$) are flexible. However, the dimension $k$ is roughly upper bounded by
$\lfloor\frac{p^e+n-1}{p^e+1}\rfloor$ or $\lfloor\frac{p^{h-e}+n-1}{p^{h-e}+1}\rfloor$. How to increase the upper bound is not an easy task, which is left as an open problem.

When $\frac{h}{e}$ is odd, we can associate Galois hulls of GRS codes with GRS codes satisfying Euclidean orthogonal property. Precisely, if there exists
an (extended) GRS code satisfying one of the Euclidean orthogonal properties of Theorems \ref{via GRS} and \ref{via eGRS}, then we can construct an
$[n,k]_q$ (extended) GRS code $\mathcal{C}$ with $\dim(Hull_e(\mathcal{C}))=l$, where $0\leq l\leq k-1$. For the case $\frac{h}{e}$ is even, some
constructions are given. However, how to propose a mechanism for the constructions of MDS codes with Galois hulls of arbitrary dimensions is also an open problem.

\section*{Acknowledgements}
{The authors thank anonymous reviewers, the editor and the associate editor for their suggestions and comments to improve the readability of this paper.
This research is supported by National Natural Science Foundation of China under Grant 11471008 and the Fundamental Research Funds for the Central Universities of CCNU under grant CCNU20TD002.
}

\end{document}